\setlist[enumerate]{itemsep=0mm}
\newtheorem{theorem}{Theorem}[section]\numberwithin{equation}{section}
\newtheorem{corollary}[theorem]{Corollary}
\newtheorem{definition}[theorem]{Definition}
\newtheorem{fact}[theorem]{Fact}
\newtheorem{lemma}[theorem]{Lemma}
\newcommand{\bbC}{\mathbb{C}}
\newcommand{\bbR}{\mathbb{R}}
\newcommand{\caC}{\mathcal{C}}
\newcommand{\caN}{\mathcal{N}}
\newcommand{\caD}{\mathcal{D}}
\DeclareMathOperator*{\Exp}{\mathbb{E}}  
\DeclareMathOperator*{\Cov}{Cov}  
\DeclareMathOperator{\Det}{Det}  
\DeclareMathOperator{\PA}{Pa}    
\DeclareMathOperator{\CH}{Ch}    
\DeclareMathOperator{\NE}{Ne}    
\DeclareMathOperator{\AN}{An}    
\DeclareMathOperator{\DE}{De}    
\newcommand{\defeq}{\vcentcolon=} 
\newcommand{\ignore}[1]{}
\newcounter{mynotes}
\DeclarePairedDelimiterX{\infdivx}[2]{(}{)}{%
  #1\;\delimsize\|\;#2%
}
\crefname{lemma}{lemma}{lemmas}
\Crefname{lemma}{Lemma}{Lemmas}
\crefname{thm}{theorem}{theorems}
\Crefname{thm}{Theorem}{Theorems}
\newcommand{\myref}[1]{\cref{#1}\mynameref{#1}{\csname r@#1\endcsname}}
\newcommand{\Myref}[1]{\Cref{#1}\mynameref{#1}{\csname r@#1\endcsname}}
\newcommand*{\Scale}[2][4]{\scalebox{#1}{$#2$}}%
\title{Identifiability of AMP chain graph models}
\author{
Yuhao Wang \footnote{National University of Singapore. Email: \texttt{yohanna.wang0924@gmail.com}} \and

Arnab Bhattacharyya\footnote{National University of Singapore. Email: \texttt{arnabb@nus.edu.sg}}\and 
}
\date{}
\begin{document}

\maketitle

\begin{abstract}

We study identifiability of Andersson-Madigan-Perlman (AMP) chain graph models, which are a common generalization of linear structural equation models and Gaussian graphical models. AMP models are described by DAGs on chain components which themselves are undirected graphs. 

{For a known chain component decomposition}, we show that the DAG on the chain components is identifiable if the determinants of the residual covariance matrices of the chain components are monotone non-decreasing in topological order. This condition extends the equal variance identifiability criterion for Bayes nets, and it can be generalized from determinants to any super-additive function on positive semidefinite matrices. When the component decomposition is  unknown, we describe conditions that allow recovery of the full structure using a polynomial time algorithm based on submodular function minimization. We also conduct experiments comparing our algorithm's performance against existing baselines\footnote{Code is available at \hyperlink{https://github.com/YohannaWANG/DCOV}{https://github.com/YohannaWANG/DCOV}}.
\end{abstract}


\section{Introduction}

Probabilistic graphical models offer architectures for modeling and representing uncertainties in decision making. 
From a computational standpoint, graphical representations {enable efficient algorithms} for inference, e.g., message passing, loopy belief propagation, and other variational inference methods \citep{KFL01}. They have found  applications in a wide range of domains, e.g., image processing, natural language processing and computational biology; see \citep{lauritzen1996graphical, KF09, WJ08} and references therein for examples.

{A typical application of} graphical models is to encode {causal} information. An influential article by \cite{Pearl95} elucidated how {\em Bayesian networks} can be used to represent causal processes and allow identification of causal effects. Bayesian networks are directed acyclic graphs (DAGs) in which the nodes represent variables of interest. Each node has a functional dependency on its parents, as determined by the graph. A popular way to substantiate Bayesian networks is as a {\em linear structural equation model (SEM)} where {variables that correspond to nodes in the graph} are a linear function of their parents' values  plus additive independent noise (often Gaussian) \citep{Bollen89, SGSH00}. \cite{HJMPS08} defined the more general {\em additive noise model} where each node is an arbitrary function of its parents with an additive independent noise. 

While Bayesian networks offer a clear conceptual way to model the causal structure of a system, they are in practice very hard to infer from data, as they require knowledge of how every single variable is generated. In applications involving hundreds of variables (e.g., in computational  biology), this requirement is unreasonable, particularly because at the end, we may only be interested in causal effects on a few target variables.
Furthermore, in {SEMs} modeled by Bayesian networks, the noise terms of different variables must be independent whereas in real-world systems, correlations can arise for various reasons (e.g., latent confounders).
\ignore{An example in medicine (\cite{sonntag2015chain}) when such a model might be appropriate is when we are modelling pain levels on different areas on the body of a patient. The pain levels can then be seen as correlated "geographically" over the body, and hence be modelled as a Markov network. Certain other factors do, however, exist that alters the pain levels locally
at some of these areas, such as the type of body part the area is located on or if local anaesthetic has been administered in that area and so on.}
An interesting middle ground is the notion of {\em chain graphs} \citep{LW98}. Here, the variable set is partitioned into {\em chain components}, and there is a DAG on these chain components. The variables inside each chain component, however, are connected by undirected edges, not directed ones. See Figure \ref{fig:chainintro} for an illustration. Thus, chain graph models interpolate between directed (causal) models and undirected (probabilistic) models.

\begin{figure*}
\centering    
\subfigure[$\mathcal{C}_1$]{\label{fig:cg_a}\includegraphics[width=33mm]{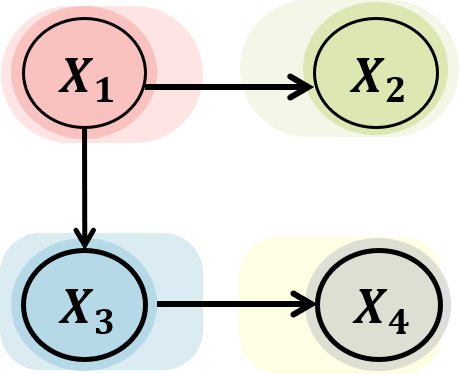}}
\hspace{12mm}
\subfigure[$\mathcal{C}_2$]{\label{fig:cg_b}\includegraphics[width=33mm]{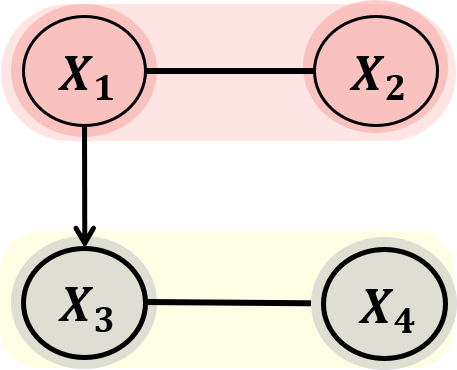}}
\hspace{12mm}
\subfigure[$\mathcal{C}_3$]{\label{fig:cg_c}\includegraphics[width=33mm]{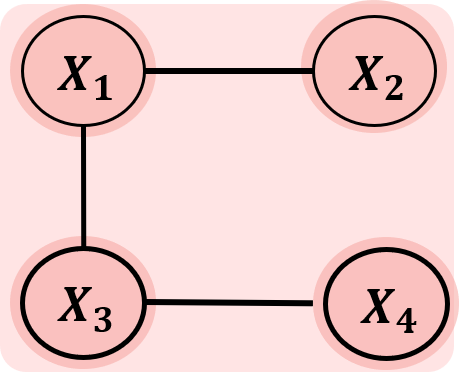}}
\caption{Chain graphs. Each shaded region is a maximal chain component.}
\vskip -0.5cm
\label{fig:chainintro}
\end{figure*}

There are several prevalent interpretations of chain graph models, namely the  Lauritzen-Wermuth-Frydenberg (LWF) \citep{LW98, frydenberg1990chain}, Alternative Markov Property or Andersson-Madigan-Perlman (AMP) \citep{andersson2001alternative}, and Multivariate Regression (MVR) \citep{cox1993linear}. They differ in the conditional independence relations implied by the graphical structure. In this work, we restrict ourselves to the AMP interpretation, which is the most natural one from a generative viewpoint. Let $\mathcal{C}$ be an AMP chain graph\footnote{See Section \ref{sec:prelims} for formal definitions.} on $n$ nodes. Suppose the nodes are partitioned into chain components $\{\tau\}$. Then, we say that {a random variable} $X \in \bbR^n$ is {\em generated by $\mathcal{C}$} if for every chain component $\tau$:
\begin{equation}\label{eqn:ampsem}
X_\tau = M_\tau X_{\PA(\tau)} + Z_\tau
\end{equation}
where $X_\tau$ is $X$ restricted to $\tau$, $\PA(\tau) = \{v: \exists u \in \tau, v\rightarrow_{\mathcal{C}} u\}$, $M_\tau$ is a matrix satisfying:
\[
(M_\tau)_{uv} \neq 0 \implies v \rightarrow_{\mathcal{C}} u,
\]
and $Z_\tau$ is an independent multivariate {Gaussian} drawn from $N(0, \Sigma_\tau)$ where $\Sigma_\tau$ satisfies:
\[
(\Sigma_\tau^{-1})_{uv} \neq 0 \implies u~\rule[.5ex]{1em}{0.4pt}_{\mathcal{C}}~ v
\]
The last condition ensures that $N(0, \Sigma_\tau)$ is Markovian with respect to the undirected induced subgraph $\mathcal{C}_\tau$ on $\tau$. One may also consider the additive noise AMP formulation where each 
\begin{equation}
X_\tau = f_\tau(X_{\PA(\tau)}) +Z_\tau,\label{eqn:addamp}
\end{equation} the noise $Z_\tau$ is as above, and the function $f_\tau$ is arbitrary, {provided it satisfies} the directed graph structure: 
\[\frac{(\partial f_\tau)_u}{\partial X_v} \neq 0 \implies v \rightarrow_{\mathcal{C}} u.\] 
On one hand, the directed edges of the AMP chain graph form a Bayesian network on the chain components. On the other hand, for each $\tau$, the undirected induced subgraph $\mathcal{C}_\tau$ describes a Gaussian graphical model for $X_\tau \mid X_{\PA(\tau)}$. 

In this work, we focus on the question of {\em identifiability} of chain graph models. That is, given knowledge of the distribution of $X$, can we recover the AMP chain graph $\mathcal{C}$ generating $X$? Moreover, can we recover $\mathcal{C}$ in polynomial time?  For Bayesian networks\footnote{For Gaussian graphical models, identifiability reduces to finding the inverse of the covariance matrix.}, the study of identifiability  has received sustained attention for more than two decades. By making faithfulness or related assumptions, many sets of researchers (e.g., \cite{SGSH00, chickering2002, ZS16, RU18}) have shown that the underlying DAG can be recovered {up to} its Markov equivalence class. This is quite unsatisfactory as the faithfulness assumption becomes too restrictive in the presence of finite sample error and the DAG is not uniquely identifiable. In a different line of work, \cite{PB14} showed that $\mathcal{C}$ is exactly identifiable for linear Gaussian {SEMs} if all the noise terms have equal variance. \cite{GH17, GH18} and \cite{PK20} established identifiability conditions for linear {SEMs} even with unknown heterogeneous error variances. Most recently, \cite{Park20} extended these conditions to additive noise models, while \cite{GDA20} further generalized to arbitrary Bayesian networks.  See also \cite{Eberhardt2017} and \cite{GZS19} for different perspectives on this line of work.

We extend these identifiability conditions from Bayesian networks to chain graphs. Our main contributions are:

\begin{enumerate}
\item[(i)]
\textbf{Additive noise AMP with known chain component decomposition}: 
We give a general class of identifiability conditions (generalizing the equal variance condition for linear {SEMs}) that imply identifiability of the DAG on a known collection of chain components. For instance, the DAG is identifiable if the determinant of the conditional covariance of a chain component $\tau$ given $\tau$'s parents is the same for all $\tau$. More generally, it is sufficient for this determinant to be monotonically non-decreasing with respect to a topological order on the chain components.  The same is true if the trace or the permanent satisfies the monotonicity condition.

\item[(ii)]
\textbf{AMP with unknown chain component decomposition}: We give an identifiability condition for recovering the chain components as well as the DAG for the standard AMP chain graph model. Informally, the requirement is quite natural: the variables in each chain component should be tightly correlated, while as a whole, each chain component should have large variance conditioned on its parents. More formally, the conditions are that:
\begin{itemize}
\item[(a)]
If $S$ is a proper subset of a chain component $\tau$:
\[
\det(\Cov(X_S \mid X_{S \setminus \tau}, X_{\PA(\tau)})) < 1
\]
\item[(b)]
$\det(\Cov(X_\tau \mid X_{\PA(\tau)}))$ is greater than 1 and monotonically non-decreasing with a topological order on the chain components $\tau$.

\end{itemize}
\end{enumerate}

In our conditions, the determinant of the covariance matrix of Gaussians plays a central role, and this is for good reason. If $X \sim N(0,\Sigma)$ is an $n$-dimensional Gaussian, then $\det(\Sigma)$ is the {\em generalized variance} of $X$ and is related to its {\em differential entropy}. Namely, the differential entropy of $X$ is $\frac12 (\log \det(\Sigma) + n \log(2\pi e))$;  see, e.g., \cite{KSG08, yunotes}. So, one can interpret condition (a) above as: If $S$ is a proper subset of $\tau$, its differential entropy conditioned on $\tau$ and $\tau$'s parents is smaller than a threshold. Similarly, the first part of condition (b) can be restated as: If $S$ equals $\tau$, the differential entropy of $S$ conditioned on its parents is larger than a threshold.

These identifiability conditions come with polynomial time algorithms. Notably, our algorithm for recovering the chain components in (ii) above involves a non-trivial submodular function minimization, in contrast to the more straightforward algorithms known for identifying linear SEMs and Bayesian networks \citep{Park20, GDA20} under analogous conditions.
\ignore{
As a subclass of probabilistic graphical model, Bayesian networks do however have a major limitations of representing the asymmetric causal relationships. Chain graphs (CGs) are a natural generalizations of both causation and correlation and is tried to solve this shortcomings. Depending on the interpretation of the undirected connections, there are several prevalent CGs, namely, Lauritzen-Wermuth-Frydenberg (LWF) \cite{10.2307/4616142}, Alternative Markov Property (AMP)\cite{andersson2001alternative} , and Multivariate Regression (MVR) \cite{cox1993linear}, distinguished by the separation criteria. In detail, MVR interpretation  treat the undirected edge with a strong intuitive meaning. $X \leftrightarrow Y$ can be replaced by a Bayesian network of $X \leftarrow H \rightarrow Y$, with $H$ the hidden common cause. LWF and AMP interpretation shares the similar combination mechanism of Bayesian networks and Markov networks but different separation criteria. In this work, we are interested in the AMP interpretation and addresses the following question: under what assumptions on the data generation process can one infer the AMP-CG from the joint distribution? We discuss how the equal determinant assumption contribute to the AMP-CGs identifiability. We further provide a practical algorithm that recovers the AMP-CGs from finite sample data. Experiments on both simulated and real data support the theoretical findings. 
}
\subsection{Technical Overview}\label{sec:tover}
In this section, we describe some of the intuition behind our identifiability conditions.

\begin{figure}
\centering    
\subfigure[$\mathcal{C}_1$]{\label{fig:chain_a}\includegraphics[width=30mm]{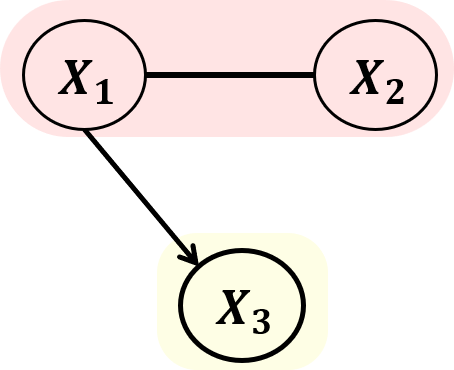}}
\hspace{20mm}
\subfigure[$\mathcal{C}_2$]{\label{fig:chain_b}\includegraphics[width=30mm]{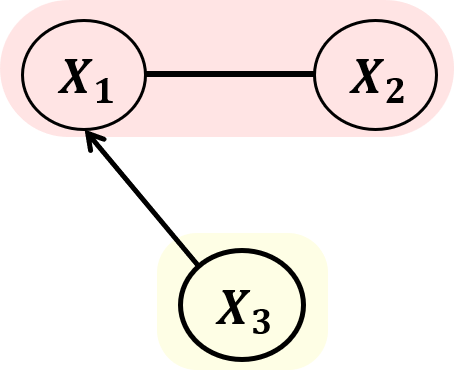}}
\caption{Chain graph identifiability}
\vskip -0.5cm
\label{fg:1}
\end{figure}
\paragraph{Known chain components.} Consider Figure \ref{fg:1} which shows two chain graphs $\mathcal{C}_1$ and $\mathcal{C}_2$; the question is to determine which of these graphs is generating a given joint distribution $(X_1, X_2, X_3)$. In $\mathcal{C}_1$, let $\begin{pmatrix} X_1\\  X_2 \end{pmatrix} \sim \mathcal{N}(0, \Sigma_1)$, and $X_3 = \beta_{1} X_1 + Z$, where $\beta_1 \neq 0$ and $Z \sim \mathcal{N}(0, \sigma^2)$. 
In $\mathcal{C}_2$, let $\begin{pmatrix} X_1\\  X_2 \end{pmatrix} = \begin{pmatrix} \beta_{2} \\0 \end{pmatrix}X_3 + Z$ where $\beta_2 \neq 0$, $Z \sim \mathcal{N}(0, \Sigma_2)$ and $X_3 \sim \caN(0,\sigma^2)$. Assume $\Det(\Sigma_1) =\Det(\Sigma_2)  = \sigma^2$, so that in both models, the determinant of the covariance of each chain component conditioned on its parents is $\sigma^2$.

We claim that in this case, one can distinguish between $\mathcal{C}_1$ and $\mathcal{C}_2$ based on the generated distribution. Our algorithm first finds the chain component $\tau$ minimizing $\det(\Cov(X_\tau))$. Note that for $\mathcal{C}_1$, using the independence of $Z$:
{
\[\Scale[0.9]
{\Cov (X_3)  = \Cov(\beta_{1}X_1 + Z) = \beta_1^2 \Cov(X_1) + \Cov(Z) \succ \Cov(Z),}    
\]}
assuming\footnote{In this work, we make the assumption everywhere that all covariance matrices are strictly positive definite.} that $\Cov(X_1) \succ 0$. Hence, 
$\det(\Cov(X_3)) > \det(Z) = \sigma^2 = \det(\Cov(X_{12}))$.
On the other hand for 
$\mathcal{C}_2$, $\det(\Cov(X_{12})) > \sigma^2 = \det(\Cov(X_3))$.
Thus, the chain component with the smallest determinant of the covariance can be identified as the first in a topological ordering. 
This can be understood as the uncertainty level of the parents is less than its children. Once the first chain component is known, we can select the second by choosing the one that minimizes the determinant of its covariance conditioned on the first chain component, and so on. It suffices to find the topological order because as described in Appendix A of \cite{GDA20}, one can identify the directed edges by standard variable selection methods.

Note that the only property we used of the determinant is that $\det(A+B) > \det(A)$ if $B$ is strictly positive definite. This property holds not only for the determinant but for many natural matrix functions. For example for any $i$, the diagonal entries $(A+B)_{ii} > A_{ii}$  when $A$ and $B$ are positive definite. Carrying out the same logic as above but now using projection to diagonal entries instead of determinants implies that the chain component DAG is identifiable when all the individual variables have equal variance, extending the result of \cite{PB14} to chain graphs. In fact, there is a large class of functions called ``generalized matrix functions'' that satisfy the desired super-additivity condition and hence result in identifiability conditions for the DAG on chain components.

\paragraph{Unknown chain components.}
Consider again $\mathcal{C}_1$ from \Cref{fg:1}, but suppose now that we do not have the chain component partitioning. Let $(X_1, X_2, X_3)$ be generated as described above. In addition to imposing the condition that $\det(\Sigma_1) = \sigma^2$, we now also require that: (i) $\det(\Cov(X_1|X_2))$ and $\det(\Cov(X_2 \mid X_1))$ are\footnote{$\det(\Cov(X_2 \mid X_1))$ is well defined, since $(X_1, X_2)$ are jointly {Gaussian}, and hence, for any choice of $x_1$, $\Cov(X_2 \mid X_1 = x_1)$ is the same.} strictly less than $1$, and (ii) $\sigma^2$ is strictly greater than $1$. 

Now, we can show that \[\det(\Cov(X_{12})) = \min_{S \subseteq \{1,2,3\}} \det(\Cov(X_S)).\] Observe that $\det(\Cov(X_3)) > \det(\Cov(X_{12})) $ already follows from the earlier discussion. We now compare $\det(\Cov(X_{12}))$ to $\det(\Cov(X_1))$ and $\det(\Cov(X_2))$. We use the fact that:
\[\det(\Cov(X_{12})) = \det(\Cov(X_1)) \cdot \det(\Cov(X_2 \mid X_1)).\]
This follows from standard facts about multivariate {Gaussians}. From our assumption $\det(\Cov(X_2 \mid X_1))<1$, we get that $\det(\Cov(X_1)) > \det(\Cov(X_{12}))$. The same holds for $\det(\Cov(X_2))$. 
Finally, we need to show that $\det(\Cov(X_{123})) > \det(\Cov(X_{12}))$. Again, we can invoke the above fact:
\[
\det(\Cov(X_{123})) = \det(\Cov(X_{12})) \cdot \det(\Cov(X_3 \mid X_{12})).
\]
Our conclusion follows from the assumption $\sigma^2 > 1$. 

For a general chain graph, it similarly follows that the non-empty set $S$ minimizing $\det(\Cov(X_S))$ is the topologically smallest. We can identify the next component by conditioning on the components already discovered, which results in a {Gaussian} on the rest, and then finding a non-empty subset with conditional covariance matrix of smallest determinant. This algorithm can be implemented efficiently. The reason is that for any positive definite $n\times n$-matrix $M$, the function $F(S) = \log\det (M[S, S])$, where $M[S, S]$ is the submatrix on rows and columns indexed by $S \subseteq [n]$, is {\em submodular}. $F$, as noted earlier, corresponds to the differential entropy of a Gaussian vector with covariance $M$, which is a submodular function, plus an additional modular term. The problem of submodular function minimization has a long and rich history, beginning with the seminal works of \cite{GLS81, GLS12} and continuing to the current day \citep{IFF01, Sch00, LSW15, DVZ18, Jiang21}.  Thus, we can invoke any of these known polynomial-time algorithms for submodular function minimization to recover the chain components in topological order.

\ignore{
It is useful to interpret the situation geometrically. Recall that if a positive definite $M$ has a Cholesky decomposition $L L^T$, and the rows of $L$ are given $a_1, \dots, a_k$, then $\det(M)$ is the square of the volume of the parallelepiped spanned by $a_1, \dots, a_k$. Now, suppose $\Cov(X_{12}) = \begin{pmatrix} \sigma^2 & (1-\epsilon)\sigma^2 \\(1-\epsilon)\sigma^2 & \sigma^2 \end{pmatrix}$ while $X_3 =  X_1 + Z$ where $Z \sim \caN(0, \sigma_3^2)$ is independent of $X_1$. Then, it can be seen that the three Cholesky vectors for $\Cov(X_{123})$ are as in \Cref{fg:2}. 
If $\sigma_3 > 1$, the volume spanned by the three vectors altogether is larger than the area spanned by the first two (shaded region); this corresponds to requirement (i) above. Also, the first two vectors of length $\sigma$ form a very small angle with each other, so that the shaded region's area is smaller than $\sigma$; this corresponds to requirement (ii) above. Finally, the length of the third vector is $\sqrt{\sigma^2 + \sigma_3^2} > \sigma$ if $\sigma_3 \geq \sigma$; this corresponds to the requirement that we imposed when the chain graphs are known.

\begin{figure}[t]
\centering
\includegraphics[width=0.35\textwidth]{fig/Fig1.png}
\caption{Stuff}
\vskip -0.3cm
\label{fg:2}
\end{figure}
}


\subsection{Related Work}
Chain graph models contain both directed and undirected edges and can be used to represent both association and causation in real-world applications \citep{sonntag2016chain}. 
The three following interpretations are the best known in the literature:
LWF \citep{lauritzen1989graphical, wermuth1990substantive, frydenberg1990chain} which generalizes both Markov random fields and Bayesian networks; 
AMP \citep{andersson2001alternative, andersson2006characterizing} which directly extends the DAG Markov property; and  MVR \citep{cox1993linear, cox2014multivariate} which originates from viewing undirected edges as representing hidden common causes.

The literature on learning pure DAG models is vast. 
One popular approach is to exploit the constraints imposed by Markov structure, e.g., the PC algorithm and its variants, like Fast Causal Inference (FCI), Really Fast Causal Inference (RFCI) and Cyclic Causal Discovery (CCD) \citep{spirtes2000, SGSH00, Rich13,  colombo2011learning, CMH13, HD13, CM14}. Another important class of algorithms aims to maximize a score function over the space of DAG's, such as Greedy Equivalence Search (GES) \citep{chickering2002, RGSG17, nandy2018high} and a recent line of work that formulates score maximization as a continuous optimization problem (e.g., \citep{zheng2018dags, zheng2020learning, WGY20}). This latest direction has resulted in algorithms that  learn the DAG structure with deep learning methods (e.g.,  \cite{yu2019dag, LBDL20, wang2020causal}).

A probability distribution may be Markov with respect to many Bayes networks; so for exact identifiability, one needs to impose more structural constraints on the DAG model. For Structural Equation Models (SEM's), identifiability can be established by leveraging asymmetries between variable pairs  \cite{shimizu2006,mooij2016},  restricting SEMs to having additive noise, such as linear non-Gaussian acyclic model (LiNGAM) \citep{shimizu2006}, general additive noise models \citep{peters2014causal}, Post-nonlinear model (PNL) \citep{zhang2016estimation}, or equal and unknown error variance \citep{PB14, GH17,Eberhardt2017, GH18, chen2019causal, GZS19, PK20, Park20, GDA20}.
 
 AMP chain graphs, our focus in this work, have been less widely studied than pure DAG models and more in the statistics literature than computer science. Informally speaking, \cite{pena2015every} showed that any AMP model can be viewed as arising from a DAG causal model subject to selection bias. 
\cite{levitz2001separation} introduced a pathwise separation criterion to characterize conditional independence relations  in AMP chain graphs. \cite{roverato2005unified, studeny2009two, pena2017identification} studied the equivalence classes of chain graph models, and \citep{pena2018reasoning} provided a factorization for positive distributions that are Markov with respect to an AMP chain graph. \cite{drton2009discrete} showed that the AMP conditional independence relations may lead to non-smooth models for discrete variables.
\cite{pena2014marginal, pena2016alternative} investigated extensions to the AMP model, e.g., the marginal AMP model (MAMP) that is a common generalization of  AMP and MVR. When the chain graph structure is known, \cite{DE06} proposed an algorithm for maximum likelihood estimation of the model parameters.
\cite{pena2012learning, pena2014learning, pena2016learning} proposed \textsc{PC-like}, a constraint based algorithm under faithfulness assumptions for learning the structure of AMP and MAMP models. Pe{\~na} also designed a score-based algorithm for AMP model structure learning similar to the work on additive noise models \citep{pena2017learning} and an algorithm based on answer set programming \citep{pena2016alternative}. 
Recently, \cite{javidian2020amp} solved the problem of efficiently finding minimal separating sets in AMP chain graphs and obtained a new decomposition-based structure learning algorithm called \textsc{Lcd-AMP}.  



\section{Notations and Preliminaries}\label{sec:prelims}
\paragraph{Probability.} We need the following useful fact about conditional covariance. The proof is a
simple generalization of the standard proof for law of total variance.
\begin{fact}[Law of Conditional Covariance]\label{fact:tcov} {If X, Y, Z are random variables with strictly positive distributions with each component having finite second moment, then:}
\[
\Cov_X(X \mid Y) = \Exp_Z[\Cov_X(X \mid Y,Z) \mid Y] + \Cov_Z(\Exp_X[X \mid Y, Z] \mid Y).
\]
\end{fact}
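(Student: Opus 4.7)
The plan is to mimic the standard proof of the (unconditional) law of total variance, but carried out throughout under the extra conditioning on $Y$, and vectorized so that it works for covariance matrices rather than scalar variances. The key algebraic identity is the telescoping decomposition
\[
X - \Exp[X \mid Y] = \bigl(X - \Exp[X \mid Y,Z]\bigr) + \bigl(\Exp[X \mid Y,Z] - \Exp[X \mid Y]\bigr),
\]
which uses only the tower property $\Exp[X \mid Y] = \Exp[\Exp[X \mid Y,Z] \mid Y]$. Write $\mu_{YZ} \defeq \Exp[X \mid Y,Z]$ and $\mu_Y \defeq \Exp[X \mid Y]$ for brevity.

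First, I would expand the outer product $(X - \mu_Y)(X - \mu_Y)^T$ using the decomposition above into four terms: the ``inner'' square $(X - \mu_{YZ})(X - \mu_{YZ})^T$, the ``outer'' square $(\mu_{YZ} - \mu_Y)(\mu_{YZ} - \mu_Y)^T$, and two cross-product terms. Taking conditional expectation given $Y$, the inner square term gives $\Exp[\Cov_X(X \mid Y,Z) \mid Y]$ by applying the tower property once (condition first on $(Y,Z)$, then on $Y$), and the outer square term gives $\Cov_Z(\Exp_X[X \mid Y,Z] \mid Y)$ directly from the definition of covariance, since $\Exp[\mu_{YZ} \mid Y] = \mu_Y$.

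The crux of the argument is showing the two cross terms vanish. For the term $\Exp[(X - \mu_{YZ})(\mu_{YZ} - \mu_Y)^T \mid Y]$, I would condition further on $(Y,Z)$: inside that inner conditional expectation, the factor $(\mu_{YZ} - \mu_Y)$ is measurable and can be pulled out, leaving $\Exp[X - \mu_{YZ} \mid Y,Z] = 0$ by the definition of $\mu_{YZ}$. The other cross term vanishes by the same argument (or by transposition). Summing the surviving terms yields exactly the claimed identity.

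The step I expect to require the most care is simply bookkeeping the order of conditioning in the cross-term calculation and making sure the matrix transposes line up correctly; the finite second-moment hypothesis is only needed to guarantee that all the conditional expectations and covariance matrices above are well-defined and that Fubini-style interchanges (tower property on matrix-valued integrands) are legal. No deep probabilistic input beyond the tower property and linearity of conditional expectation is needed, which matches the authors' remark that the proof is ``a simple generalization of the standard proof.''
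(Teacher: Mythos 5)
Your proof is correct and is exactly the argument the paper has in mind: the authors do not write out a proof but state that it is ``a simple generalization of the standard proof for law of total variance,'' and your telescoping decomposition of $X - \Exp[X\mid Y]$, with the cross terms killed by conditioning on $(Y,Z)$ and pulling out the measurable factor, is precisely that standard argument carried out under the extra conditioning on $Y$ and vectorized to matrix-valued outer products. Nothing further is needed.
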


The following result yields a very useful decomposition for covariance of normal distributions. 
\begin{fact}\label{FACT:FACTORIZATION}
If $X=(X_A, X_B)$ is distributed jointly as a {Gaussian} $\mathcal{N}(0, \Sigma)$, then:
\[
\det(\Cov(X)) = \det(\Cov(X_A)) \cdot \det(\Cov(X_B \mid X_A))
\]
where $\Cov(X_B \mid X_A) = \Cov(X_B \mid X_A = x_A)$ is independent of $x_A$.
\end{fact}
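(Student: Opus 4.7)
The plan is to verify this by reducing the claim to the classical Schur complement determinant identity together with the formula for the conditional distribution of a jointly Gaussian vector. I would start by writing $\Sigma$ in block form according to the partition $(A, B)$:
\[
\Sigma = \begin{pmatrix} \Sigma_{AA} & \Sigma_{AB} \\ \Sigma_{BA} & \Sigma_{BB} \end{pmatrix},
\]
with $\Sigma_{AA} = \Cov(X_A)$, $\Sigma_{BB} = \Cov(X_B)$, and $\Sigma_{BA} = \Sigma_{AB}^\top$. Since we are assuming throughout the paper that all covariance matrices are strictly positive definite, $\Sigma_{AA}$ is invertible, and the block LDL-type factorization
\[
\Sigma = \begin{pmatrix} I & 0 \\ \Sigma_{BA}\Sigma_{AA}^{-1} & I \end{pmatrix} \begin{pmatrix} \Sigma_{AA} & 0 \\ 0 & \Sigma_{BB} - \Sigma_{BA}\Sigma_{AA}^{-1}\Sigma_{AB} \end{pmatrix} \begin{pmatrix} I & \Sigma_{AA}^{-1}\Sigma_{AB} \\ 0 & I \end{pmatrix}
\]
is valid. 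Taking determinants (the triangular factors have determinant $1$) yields
\[
\det(\Sigma) = \det(\Sigma_{AA}) \cdot \det\!\bigl(\Sigma_{BB} - \Sigma_{BA}\Sigma_{AA}^{-1}\Sigma_{AB}\bigr).
\]

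The second step is to identify the Schur complement $\Sigma_{BB} - \Sigma_{BA}\Sigma_{AA}^{-1}\Sigma_{AB}$ with $\Cov(X_B \mid X_A)$. For this I would define $Y \defeq X_B - \Sigma_{BA}\Sigma_{AA}^{-1} X_A$ and observe two things: (i) a direct computation shows $\Cov(Y, X_A) = \Sigma_{BA} - \Sigma_{BA}\Sigma_{AA}^{-1}\Sigma_{AA} = 0$; (ii) since $(X_A, X_B)$ is jointly Gaussian, $(Y, X_A)$ is also jointly Gaussian, so zero covariance implies independence. Hence the conditional distribution of $X_B$ given $X_A = x_A$ equals the distribution of $Y + \Sigma_{BA}\Sigma_{AA}^{-1} x_A$, whose covariance is $\Cov(Y) = \Sigma_{BB} - \Sigma_{BA}\Sigma_{AA}^{-1}\Sigma_{AB}$. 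This both establishes that $\Cov(X_B \mid X_A = x_A)$ is independent of $x_A$ (justifying the notation $\Cov(X_B \mid X_A)$) and identifies it with the Schur complement.

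Substituting into the determinant identity gives
\[
\det(\Cov(X)) = \det(\Sigma) = \det(\Sigma_{AA}) \cdot \det(\Cov(X_B \mid X_A)) = \det(\Cov(X_A)) \cdot \det(\Cov(X_B \mid X_A)),
\]
which is exactly the claim. There is no real obstacle here: the only subtlety is making sure the Schur complement is well-defined, which follows from the paper's standing assumption that the covariance matrices involved are strictly positive definite, and observing that joint Gaussianity (not just uncorrelatedness) is what lets us turn the vanishing of $\Cov(Y, X_A)$ into genuine independence and hence into a statement about the conditional covariance.
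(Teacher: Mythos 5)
Your proposal is correct and follows essentially the same route as the paper's proof: both reduce the claim to the Schur complement determinant identity $\det(\Sigma) = \det(\Sigma_{AA})\det(\Sigma_{BB} - \Sigma_{BA}\Sigma_{AA}^{-1}\Sigma_{AB})$ together with the identification of the Schur complement as the conditional covariance of a jointly Gaussian vector. The only difference is that you spell out the block factorization and the decorrelation argument via $Y = X_B - \Sigma_{BA}\Sigma_{AA}^{-1}X_A$, which the paper simply cites as well-known.
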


\begin{proof}
It is well-known that if $\Cov(X) = \begin{pmatrix} \Sigma_{AA} & \Sigma_{AB}\\ \Sigma_{BA} & \Sigma_{BB}\end{pmatrix}$:
\[
\Cov(X_A) = \Sigma_{AA} \qquad \Cov(X_B \mid X_A) = \Sigma_{BB} - \Sigma_{BA}\Sigma_{AA}^{-1} \Sigma_{AB}
\]
On the other hand, it follows from the properties of Schur complement that: $\det(\Sigma) = \det(\Sigma_{AA}) \cdot \det(\Sigma_{BB} - \Sigma_{BA}\Sigma_{AA}^{-1}\Sigma_{AB})$. The result follows.
\end{proof}

\paragraph{Chain Graphs.}
Following conventions in the field, a variable is denoted by an uppercase letter, e.g., $X$, and its value is denoted by the corresponding lowercase letter, $z \in Z$, where $Z$ is the
state space of $X$. Graphs in this paper contain both directed ('$\rightarrow$') and undirected ('---') edges. Below we will further invoke the most {central definitions and notations} used in this paper. For a general account, we refer the reader to \cite{lauritzen1996graphical} and \cite{edwards2012introduction}.

{A chain graph $\mathcal{C}$ 
consists of a \emph{vertex set} $V$ and an \emph{edge set} $E \in V \times V$.
A \emph{path} in $\mathcal{C}$ is a sequence of distinct vertices $<v_0,\dots, v_n>$ such that $v_{i-1}$ and $v_i$ are adjacent for all $1 \leq i \leq k$, and is called a \emph{cycle} if $v_n = v_0$. Moreover, a \emph{semi-directed cycle} exists if $v_1 \rightarrow v_2$ is in $\mathcal{C}$ and $v_i \rightarrow v_{i+1}$, $v_i \longleftrightarrow v_{i+1}$ or $v_i - v_{i+1}$ is in $\mathcal{C}$ for all $1 < i <n$.
Chain graph $\mathcal{C}$ is a graph has no semi-directed cycles.}
Two vertices joined by an edge are called \emph{adjacent}. For vertices $(u, v) \in E$ but $(v, u) \not\in E$, we write $u \rightarrow v $, where vertex $u$ is a \emph{parent} of $v$. If both $(u, v) \in E$ and $(v,u) \in E$, we denote it by $u - v$, which means $u$ is a \emph{neighbor} of $v$. The vertex set of a chain graph can be partitioned into \emph{chain components} $\tau \mid \tau \in \mathcal{T}, (V = \cup_{(\tau \in \mathcal{T})} \tau)$. Edges within chain components are undirected whereas edges between two chain components are directed. 
For any subset $S$, the set of parents of $v$ is denoted as $\PA(v) \defeq \{ v \in V \setminus S \mid v \rightarrow s \in \mathcal{C} \text{ for some } s \in S\}$, the set of $children$ of $v$ is denoted as $\CH(v) \defeq \{ v \in V \setminus S \mid s \rightarrow v \in \mathcal{C} \text{ for some } s \in S\}$, the set of $neighbours$ is denoted as $\NE(v) \defeq \{ v \in V \setminus S \mid v - s \in \mathcal{C} \text{ for some } s \in S\}$. A chain graph with no directed edges is an undirected graph (UG), while a chain graph with no undirected edges is a DAG. 
If there exists a directed path between chain components $k \rightarrow \dots \rightarrow j$,
then $k$ is an $ancestor$ of its $descendant\:\: j$. The set of $ancestors$ and $descendants$ are denoted as $\AN (j)$ and $\DE (k)$. 
A $source$ node is any node $X_{\tau}$ such that $\PA (X_{\tau}) = \emptyset$. 
A $sink$ node is any node $X_{\tau}$ such that $\CH (X_{\tau}) = \emptyset$. 
Ancestral set is any set $A \in V$ such that $X_{\tau} \in A \Longrightarrow \PA(\tau) \in A$.  
The chain components $\tau$ of a chain graph are the connected components of the undirected graph obtained by removing all directed edges from the chain graph. In a DAG, all chain components are singletons.
For $S \subseteq V$, $\mathcal{C}_S$ denotes the induced subgraph on $S$.

By taking into account the directed connections of chain components, AMP-chain graphs admits a topological ordering of its chain components. For statistical identifiability of chain graph $\mathcal{C}$, we will consider it sufficient to learn the partition into chain components $\tau_1, \dots, \tau_t$, and a topological ordering $ \prec$ such that 
$\tau_j \rightarrow \tau_k \Longrightarrow \tau_j \prec \tau_k$. One can learn the directed and undirected edges using standard parameter estimation algorithms.

\paragraph{Matrix Algebra.}
Our identifiability condition in the case of known chain components is in terms of positive and super-additive families, which we define next.
\begin{definition}
Let $\bbC_n$ denote the cone of $n\times n$ positive semidefinite matrices. We say that
a real-valued function $d_n: \bbC_n \to \bbR$ is {\em positive and super-additive} if: (i) $d_n(A) >0$ for all positive definite matrices $A$, and (ii)  for all positive semidefinite matrices $A, B$:
\[d_n(A+B) \geq d_n(A) + d_n(B).\]
A \emph{positive and super-additive family} is a collection of functions $f_n: \bbC_n \to \bbR$,
each of which is positive and super-additive.
\end{definition}
We have several examples of families of positive and super-additive functions:
\begin{itemize}
\item
Clearly, the projection on any diagonal element and the matrix trace function are positive and super-additive.
\item
By Minkowski's determinant theorem (see, e.g., \cite{marcus1992survey}), it known that for all $A, B \in \bbC_n$:
$(\det(A+B))^{1/n} \geq (\det(A))^{1/n} + (\det(B))^{1/n}$. Hence, $\{\det^{1/n}: \bbC_n \to \bbR\}$ is positive and super-additive.
\item
For $\chi$ an irreducible character on a subgroup $H$ of $S_n$ (the permutation group on $n$ elements),  define the {\em generalized matrix function} with respect to $H$ and $\chi$ as:
\[d_\chi^H(A) = \sum_{\sigma \in H} \chi(\sigma) \prod_{i=1}^n a_{i,\sigma(i)}\]
where $A = (a_{i,j})$. \cite{schur1918endliche} showed that $d_\chi^H(A) > 0$ for all positive definite $A$. It is also known (e.g., \cite{merris1997multilinear}, p.~228) that they satisfy the super-additivity condition. 
Hence, the determinant\footnote{The super-additivity of the determinant is also directly implied by the super-additivity of $\det^{1/n}$.}, permanent, and the Hadamard matrix function (product of diagonal entries) all form positive and super-additive families.
\end{itemize}

\ignore{

 and is equivalent to the set of conditional independences satisfied by the following recursive normal linear simultaneous equations model:
\begin{equation}
    X_{\tau} = \beta_{\tau}X_{pa_{D}(\tau)} + N _\tau,   \tau \in \mathcal{T}
\label{eq:1}
\end{equation}
With $X \equiv (X_v | v \in V) 	\equiv (X_\tau |\tau \in \mathcal{T})$ is a normal random vector in $\mathcal{R}^{v}$, $\beta_{\tau}$ is a $\tau \times pa_{D}(\tau)$ matrix satisfying $v \not\in pa_{G}(u) \Rightarrow (\beta_{\tau})_{uv} = 0$. 
Assume the noise term follows the distribution of $N \sim (0,  \Sigma_{\tau}), \Sigma_{\tau} \in \boldsymbol{P}(G_{\tau})$, 
where $\boldsymbol{P}(G_{\tau})$ is a set of all positive semidifinite $\tau \times \tau$ covariance matrices such that $\mathcal{N}(0, \Sigma_{\tau})$ is a global $G_{\tau}$ Markovian.
In matrix form, equation \eqref{eq:1} is given by $X = BX + N$, where $B = [B_1| \dots |B_d]$ is a weighted adjacency matrix and $N = (N_1| \dots |N_d)$ is a noise vector with independent elements. Equation \eqref{eq:1} admits the unique solution $X  = (I-B)^{-1}N$.
And the observed random vector $X = (X_1, \dots, X_n)$ is centered ($\mathbb{E}[X] = \mathbb{E}[(I-B)^{-1}N] = 0$) with the covariance $  \Sigma = \mathbb{E}[XX^T] = (I-B)^{-1}\Omega ((I-B)^{-1})^T $, where $\Omega = 1/m\sum_{i=1}^{m}n_in_i^T$ is the sample covariance matrix. It is well-known that if $\Omega$ is invertible, we have the precision matrix $ \Omega^{-1} = \mathbb{E}[NN^T]^{-1}$.}

\section{Identifiability with known chain component decomposition}\label{sec:id}
In this section, we give a general class of conditions which are sufficient to ensure that the DAG structure 
of the chain graph is identifiable from data generated by it. Here, the chain component decomposition $\caD$
is already known to the algorithm. $\caD$ consists of $t$ disjoint maximal chain components that partition the 
variable set.

We formulate our results for general AMP chain graph models. They will immediately imply the conditions for additive noise AMP models mentioned in the Introduction.

\begin{algorithm}
$\mathcal{A}, P \leftarrow \emptyset$\;
$i \leftarrow 0$\;
\While{$|\mathcal{A}| \neq t$}{
$\tau_i \leftarrow \arg\min_{\tau \in \mathcal{C} \setminus \mathcal{A}} d_{|\tau|}(\Exp[\Cov(X_{\tau} \mid X_P)])$\;
$\mathcal{A} \leftarrow \mathcal{A} \cup \{\tau_i\}$\;
$P \leftarrow P \cup \tau_i$\;
$i \leftarrow i+1$\;
}
Return the ordering $(\tau_1, \dots, \tau_t)$
\caption{\textcolor{black}{Our algorithm} for learning the topological order of a chain graph with chain component decomposition $\mathcal{D}$ of size $t$.}
\label{algo:pop1}
\end{algorithm}
\vspace{-2mm}

\ignore{
Let $X_{\tau}$ be a finite family of random variables. We address the following problem:
\begin{problem}
    Given i.i.d. samples from a joint distribution $P^{(X_i), i\in V}$, infer the true DAGs of the process that generated the data.
\end{problem}

\noindent As a special class of chain graph, DAG structures can be uniquely identified when it follows a structural equation model with equal error variance. Detailed theory is shown below:

\begin{theorem}[\cite{chen2019causal} \cite{peters2014identifiability} ]
Under a structural equation model with directed acyclic graph $\mathcal{G}_0$ of the form $X_j = \sum_{k \in \textbf{PA}_{j}^{\mathcal{G}_0}}\beta_{jk}X_k + N_j, (j=1,\dots,n)$. Let $\mathcal{L}(X)$ represent $\mathcal{G}(B_X)$ with $X \sim (B_x, \sigma_x^2)$ and $\mathcal{L}(Y)$ represent $\mathcal{G}(B_Y)$ with $Y \sim (B_y, \sigma_Y^2)$. If $\sigma_x^2 = \sigma_Y^2$, then  $\mathcal{G}(B_X)$ and $\mathcal{G}(B_Y)$ lead to the same distribution.
\label{th:1}
\end{theorem}

\noindent The inductive proof of Theorem \eqref{th:1} is available in \cite{chen2019causal} \cite{peters2014identifiability}. Work in \cite{peters2014identifiability} prove this theorem by leveraging the Markov condition and causal minimality, while \cite{chen2019causal} implies this identifiability condition by an ordering among conditional variances. 
In short, by recover the topological ordering among variables, the DAG structures can be thus identified.
However, the above identifiable condition is too restrictive. \cite{park2020identifiability} provides a new identifiability condition for ANMs with any forms of dependency functions and heterogeneous error distributions. A bi-variate example in distinguishing between $X_1 \rightarrow X_2$ and $X_2 \rightarrow X_1$ is given in Appendix \ref{appendix:A}. The general theorem is given below:
\begin{theorem}[\cite{park2020identifiability}](\textbf{ANMs}) 
Let $P(X)$ be generated from an ANM with DAG G and true ordering $\pi$. Suppose that causal minimality holds. DAG structure is uniquely identifiable if either of the two following conditions is satisfied: For any node $j = \pi_m \in V, k \in De(j)$, and $\ell \in An(j) $
\begin{equation}
\begin{split}
    \text{Forward stepwise selection} & : \sigma_j^2 < \sigma_k^2 + \mathbb{E}[\operatorname{Var}(\mathbb{E}[X_k | X_{Pa(k)}] \:|\: X_{\pi_1}, \dots, X_{\pi_{m-1}}  )], or \\
    \text{Backward stepwise selection} &: \sigma_j^2 > \sigma_{\ell}^2 - \mathbb{E}[\operatorname{Var}( \mathbb{E}[X_{\ell}| X_{\pi_1}, \dots, X_{\pi_m} 	\setminus X_{\ell}]\: | \:X_{Pa(\ell)})]
\end{split}
\label{th:2}
\end{equation}
\end{theorem}

The above theorem generalized the equal error variance assumption in \cite{peters2014identifiability} for DAG structure learning. Now we extend the above two theories into the chain graph structure learning condition.

\begin{problem}
    Given i.i.d. samples from a joint distribution $P^{(X_i), i\in V}$, infer the true AMP-CGs of the process that generated the data.
\end{problem}

Similarly, because AMP-CG shares the similar data generation mechanism of DAG \cite{andersson2001alternative}. We will prove our intuition that chain graph structures are identifiable by recovering the topological ordering among the determinant of chain components.
Specifically, we extend the above theorem into the chain graph condition and prove that chain graph structures can be identified within two steps. In the first step, the algorithm discover the ordering component wisely using the equal determinant assumption. In the second step, the algorithm estimate the directed edges among chain components. In the end, we extend the equal determinant assumption into a generalized form. Our theorem lists below:}
\ignore{
\begin{theorem}
Given two chain graphs $\mathcal{G}_X$ and $\mathcal{G}_Y$, let $X \sim (D_X, \Sigma_X)$ and $Y \sim (D_Y, \Sigma_Y)$ with both $G_X$ and $G_Y$ are AMP-CGs. If $\Det (\mathbb{E}(\Cov(X) )) = \Det (\mathbb{E}(\Cov (Y)))$, then $\mathcal{G}(X) = \mathcal{G}(Y)$, $D_X = D_Y$, and $\Sigma_X = \Sigma_Y$.
\label{th:3}
\end{theorem}
}

\begin{theorem}\label{THM:ID}
Suppose the random variable $X$ is generated by an AMP-CG $\caC$ with known chain component decomposition $\caD$. Then, $\caC$ is identifiable from $P$ if there {exists} a  topological ordering $\pi$ of $\caC$ and a positive and super-additive family $\{d_n: \bbC_n \to \bbR\}$ such that:
\begin{small}
\begin{align}\label{eq:thmcond}
d_{|\tau|}\left(\Exp_{X_{\PA(\tau)}}\Cov_{X_\tau} (X_{\tau} \mid X_{\PA(\tau)}) \right)  \leq  d_{|\tau'|}\left(\Exp_{X_{\PA(\tau')}}\Cov_{X_{\tau'}}(X_{\tau'} \mid X_{\PA(\tau')})\right)
\end{align}
\end{small}
for any two chain components $\tau, \tau'$ where $\tau \prec_\pi \tau'$.
\label{th:4}
\end{theorem}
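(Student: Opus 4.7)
The plan is to show that Algorithm~\ref{algo:pop1} correctly recovers a topological ordering of the chain-component DAG consistent with $\pi$ (up to ties), after which the directed edges across components and the undirected edges inside each component can be read off by standard regression and inverse-covariance computations (as in Appendix A of \cite{GDA20}).

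I will argue by induction on the iteration index that after $i$ steps the picked set $P$ is ancestral in the chain-component DAG (so $P$ contains no descendants of any unpicked component), and that the next pick $\tau^* = \argmin_{\tau} d_{|\tau|}(\Exp[\Cov(X_\tau \mid X_P)])$ satisfies $\PA(\tau^*) \subseteq P$. Suppose for contradiction that $\tau^*$ has an unpicked parent. Since $P$ lies in the non-descendants of $\tau^*$, the AMP Markov property gives $\Cov(X_{\tau^*} \mid X_P, X_{\PA(\tau^*) \setminus P}) = \Cov(X_{\tau^*} \mid X_{\PA(\tau^*)})$, together with the corresponding identity for conditional means. Applying Fact~\ref{fact:tcov} with $Y = X_P$ and $Z = X_{\PA(\tau^*) \setminus P}$ and then taking the outer expectation over $X_P$ produces the decomposition
\[
\Exp[\Cov(X_{\tau^*} \mid X_P)] \;=\; \Exp[\Cov(X_{\tau^*} \mid X_{\PA(\tau^*)})] \;+\; \Delta_{\tau^*},
\]
where $\Delta_{\tau^*} \defeq \Exp[\Cov(\Exp[X_{\tau^*} \mid X_{\PA(\tau^*)}] \mid X_P)] \succeq 0$, and $\Delta_{\tau^*} \neq 0$ whenever $\PA(\tau^*) \not\subseteq P$ under the standing strict positive-definiteness assumption on covariances together with causal minimality. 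Positivity and super-additivity of $d$, upgraded to strict monotonicity via the rescaling $A+B = A^{1/2}(I + A^{-1/2} B A^{-1/2}) A^{1/2}$, then yield $d_{|\tau^*|}(\Exp[\Cov(X_{\tau^*} \mid X_P)]) > d_{|\tau^*|}(\Exp[\Cov(X_{\tau^*} \mid X_{\PA(\tau^*)})])$.

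To close the argument, I will pick $\tau_0$ to be a minimal element of $\AN(\tau^*) \setminus P$ in the chain-component DAG; such a $\tau_0$ exists because $\AN(\tau^*) \setminus P$ is non-empty, and by minimality $\PA(\tau_0) \subseteq P$, so $\tau_0 \prec_\pi \tau^*$. The same decomposition applied to $\tau_0$ has $\Delta_{\tau_0} = 0$, hence $\Exp[\Cov(X_{\tau_0} \mid X_P)] = \Exp[\Cov(X_{\tau_0} \mid X_{\PA(\tau_0)})]$. Hypothesis \eqref{eq:thmcond} applied to the ordered pair $(\tau_0, \tau^*)$ then yields
\[
d_{|\tau_0|}(\Exp[\Cov(X_{\tau_0} \mid X_P)]) \leq d_{|\tau^*|}(\Exp[\Cov(X_{\tau^*} \mid X_{\PA(\tau^*)})]) < d_{|\tau^*|}(\Exp[\Cov(X_{\tau^*} \mid X_P)]),
\]
contradicting the choice of $\tau^*$ as an argmin. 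Therefore every pick is eligible, the induction closes, and Algorithm~\ref{algo:pop1} returns a valid topological ordering of the chain components, from which $\caC$ is identifiable.

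The main obstacle will be upgrading positivity and (non-strict) super-additivity of $d$ to the strict monotonicity $d(A+B) > d(A)$ for $A \succ 0$ and $B \succeq 0$ with $B \neq 0$. This does not follow formally from the two defining properties alone, but can be verified for every member of the positive and super-additive family listed after its definition by reducing via the rescaling identity above to showing $d(I + C) > d(I)$ for every non-zero $C \succeq 0$, which in turn follows from an eigenvalue computation for determinant, trace, Hadamard product, and generalized matrix functions. A secondary subtlety is establishing $\Delta_{\tau^*} \neq 0$ whenever $\PA(\tau^*) \not\subseteq P$; this is a non-degeneracy condition on the generating model, which in the linear Gaussian AMP case reduces to strict positive-definiteness of the noise covariance together with causal minimality on $M_{\tau^*}$.
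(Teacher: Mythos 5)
Your proposal follows essentially the same route as the paper's proof: argue that Algorithm~\ref{algo:pop1} is correct by maintaining that $P$ is ancestral, decompose $\Exp[\Cov(X_{\tau}\mid X_P)]$ via the law of conditional covariance (Fact~\ref{fact:tcov}) into $\Exp[\Cov(X_{\tau}\mid X_{\PA(\tau)})]$ plus a nonnegative remainder, and combine positivity and super-additivity of $d$ with \eqref{eq:thmcond} to show the argmin always has all its parents in $P$. Two of your local refinements are worth keeping because they land exactly on the points the paper treats loosely. First, comparing the offending $\tau^*$ against a \emph{minimal} element $\tau_0$ of $\AN(\tau^*)\setminus P$ guarantees $\tau_0 \prec_\pi \tau^*$ in the fixed ordering $\pi$, whereas the paper's lemma compares against an arbitrary eligible component $\tau_1$ and asserts without justification that $\tau_1$ precedes $\tau_2$ in $\pi$ (false in general for incomparable components). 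Second, you are right that the remainder term is in general only positive semidefinite and nonzero---its rank is at most $|\PA(\tau^*)\setminus P|$, which can be smaller than $|\tau^*|$---whereas the paper asserts it is positive definite and thereby gets strictness directly from $d(B)>0$ on definite matrices; your route consequently needs the strict monotonicity $d(A+B)>d(A)$ for $A\succ 0$ and nonzero $B\succeq 0$, which, as you note, does not follow from the definition alone and must be verified for each listed family (it does hold for the determinant, trace, Hadamard, and generalized matrix functions), and it also needs the non-degeneracy (causal-minimality) hypothesis you flag to ensure the remainder is nonzero. These caveats are genuine features of the theorem as stated rather than defects of your argument.
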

\vspace{-1mm}
\begin{proof}
We show that Algorithm \ref{algo:pop1} recovers the chain graph under the assumptions of the Theorem \ref{th:4}. This follows immediately from the following lemma, as it shows that at every step $i$, the algorithm chooses as $\tau_i$ a chain component whose parents are contained in the current $\mathcal{A}$. 

\begin{lemma}
Let $\mathcal{A}$ be an ancestral set of chain components, and let $P = \{v: v \in \tau \in \mathcal{A}\}$. Assume the condition (\ref{eq:thmcond}) above. Suppose $\tau_1$ and $\tau_2$ are chain components in $\caD \setminus \mathcal{A}$ such that $\PA(\tau_1) \subseteq \mathcal{A}$ but $\PA(\tau_2) \not \subseteq \mathcal{A}$. Then:
\[
d_{|\tau_1|}\left(\Exp_{X_P} \Cov_{X_{\tau_1}}(X_{\tau_1} \mid X_P)\right) < d_{|\tau_2|}\left(\Exp_{X_P} \Cov_{X_{\tau_2}}(X_{\tau_2} \mid X_P)\right)
\]
\end{lemma}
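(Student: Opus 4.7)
The plan is to analyze each side of the claimed inequality separately using the AMP conditional structure, rewrite both in terms of the parent-conditioned quantities featured in condition (\ref{eq:thmcond}), and then chain through the topological ordering $\pi$ via super-additivity.

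For $\tau_1$, I use ancestrality of $\mathcal{A}$: since $\tau_1 \notin \mathcal{A}$ and $\mathcal{A}$ is closed under parents, $P$ contains no descendants of $\tau_1$, while $\PA(\tau_1) \subseteq P$. The AMP Markov property then makes $X_{\tau_1}$ conditionally independent of $X_{P \setminus \PA(\tau_1)}$ given $X_{\PA(\tau_1)}$, so $\Cov(X_{\tau_1} \mid X_P) = \Cov(X_{\tau_1} \mid X_{\PA(\tau_1)})$ pointwise and
\[
\Exp_{X_P}\Cov(X_{\tau_1}\mid X_P) \;=\; \Exp_{X_{\PA(\tau_1)}}\Cov(X_{\tau_1}\mid X_{\PA(\tau_1)}).
\]
For $\tau_2$, I apply Fact \ref{fact:tcov} with $Y = X_P$ and auxiliary $Z = X_Q$, where $Q = \PA(\tau_2) \setminus P$ is non-empty by hypothesis. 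Since $P \cup Q \supseteq \PA(\tau_2)$ and consists entirely of non-descendants of $\tau_2$, Markov collapses the inner conditional covariance to $\Sigma_{\tau_2}$ and the inner conditional mean to $\Exp[X_{\tau_2} \mid X_{\PA(\tau_2)}]$. Taking $X_P$-expectation of the LCC identity yields
\[
\Exp_{X_P}\Cov(X_{\tau_2}\mid X_P) \;=\; A + B, \qquad A := \Exp\Cov(X_{\tau_2}\mid X_{\PA(\tau_2)}),
\]
with $B := \Exp_{X_P}\Cov\bigl(\Exp[X_{\tau_2}\mid X_{\PA(\tau_2)}] \mid X_P\bigr)$ strictly positive definite: the AMP generative equation makes the conditional mean depend non-trivially on $X_Q$, and $X_Q \mid X_P$ has strictly positive-definite covariance under the paper's standing assumption that all covariance matrices are strictly PD.

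The final step combines these using the super-additive family. Super-additivity together with the positivity of $d_{|\tau_2|}$ on positive-definite matrices gives
$d_{|\tau_2|}(A+B) \geq d_{|\tau_2|}(A) + d_{|\tau_2|}(B) > d_{|\tau_2|}(A)$,
while condition (\ref{eq:thmcond}) applied to the pair $\tau_1 \prec_\pi \tau_2$ yields $d_{|\tau_1|}(\Exp\Cov(X_{\tau_1}\mid X_{\PA(\tau_1)})) \leq d_{|\tau_2|}(A)$. Chaining these two inequalities with the identity established for $\tau_1$ completes the argument. The main obstacle is asserting $\tau_1 \prec_\pi \tau_2$ in the given topological ordering: the lemma ranges over all $\tau_1$ with $\PA(\tau_1) \subseteq \mathcal{A}$, but the hypothesis fixes a single $\pi$ that need not place every such $\tau_1$ before $\tau_2$. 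The natural resolution, consistent with how the lemma is used to prove algorithm correctness, is to instantiate $\tau_1$ as the $\pi$-earliest element of $\caD \setminus \mathcal{A}$; such a $\tau_1$ automatically satisfies $\PA(\tau_1) \subseteq \mathcal{A}$ (by topologicality of $\pi$ and ancestrality of $\mathcal{A}$) and precedes every $\tau_2 \in \caD \setminus \mathcal{A}$ in $\pi$, which is exactly the chain component the greedy algorithm must select next.
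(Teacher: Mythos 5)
Your argument is essentially the paper's own proof: the same application of the law of conditional covariance (Fact \ref{fact:tcov}) to split $\Exp_{X_P}\Cov(X_{\tau_2}\mid X_P)$ into the parent-conditioned residual plus a covariance-of-conditional-mean term, the same use of the AMP conditional independence to collapse $\Exp_{X_P}\Cov(X_{\tau_1}\mid X_P)$ to $\Exp_{X_{\PA(\tau_1)}}\Cov(X_{\tau_1}\mid X_{\PA(\tau_1)})$, and the same chaining of super-additivity, positivity of $d_{|\tau_2|}$, and condition (\ref{eq:thmcond}). The one place you diverge is worth keeping: the paper's proof simply asserts ``$\tau_1$ must precede $\tau_2$ in the topological ordering $\pi$,'' which does not follow from the lemma's hypotheses for an arbitrary fixed $\pi$ (e.g.\ with $\mathcal{A}=\emptyset$, an isolated source $\tau_1$ may legitimately be placed after $\tau_2$ in $\pi$, and then (\ref{eq:thmcond}) runs in the wrong direction). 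Your resolution --- prove the inequality with $\tau_1$ taken to be the $\pi$-earliest component of $\caD\setminus\mathcal{A}$, which automatically has $\PA(\tau_1)\subseteq\mathcal{A}$ and precedes every other component of $\caD\setminus\mathcal{A}$ --- is exactly what Algorithm \ref{algo:pop1}'s correctness requires, and is the right repair of the lemma as used. One caveat you share with the paper: both of you assert that the term $B=\Exp_{X_P}\Cov\bigl(\Exp[X_{\tau_2}\mid X_{\PA(\tau_2)}]\mid X_P\bigr)$ is \emph{strictly} positive definite whenever $\PA(\tau_2)\not\subseteq P$; this needs the conditional mean map restricted to $X_{\PA(\tau_2)\setminus P}$ to have full row rank $|\tau_2|$ (in the linear case, $M_{\tau_2}$ restricted to those columns), which is a genuine additional genericity/faithfulness-type assumption rather than a consequence of the model, so the strictness of the final inequality deserves an explicit hypothesis in both versions.
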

\begin{proof}
Note that $\tau_1$ must precede $\tau_2$ in the topological ordering $\pi$, and hence (\ref{eq:thmcond}) holds with $\tau = \tau_1$ and $\tau' = \tau_2$.
We invoke the law of conditional covariance (Fact \ref{fact:tcov}).
\begin{small}
\begin{align*}
&\Exp_{X_P} \Cov_{X_{\tau'}}(X_{\tau'} \mid X_P)\\
= & \Exp_{X_P} \Exp_{X_{\PA(\tau')}}\left[\Cov_{X_{\tau'}}(X_{\tau'} \mid X_P, X_{\PA(\tau')})\mid X_P\right] + \Exp_{X_P}\Cov_{X_{\PA(\tau')}}\left(\Exp_{X_{\tau'}}(X_{\tau'} \mid X_P, X_{\PA(\tau')})\mid X_P\right) \\
=& \Exp_{X_P} \Exp_{X_{\PA(\tau')}}\left[\Cov_{X_{\tau'}}(X_{\tau'} \mid X_{\PA(\tau')})\mid X_P\right] + \Exp_{X_P}\Cov_{X_{\PA(\tau')}}\left(\Exp_{X_{\tau'}}(X_{\tau'} \mid X_{\PA(\tau')})\mid X_P\right) \\
=& \Exp_{X_{\PA(\tau')}}\left[\Cov_{X_{\tau'}}(X_{\tau'} \mid X_{\PA(\tau')})\right] + \Exp_{X_P}\Cov_{X_{\PA(\tau')}}\left(\Exp_{X_{\tau'}}(X_{\tau'} \mid X_{\PA(\tau')})\mid X_P\right)
\end{align*}
\end{small}
The second equality follows from the fact that $X_{\tau'}$ is independent of $X_P$, conditioned on $X_{\PA(\tau')}$. Now, note that the second term in the last line above is positive definite if $P$ does not contain $\PA(\tau')$.  Therefore, using the fact that $d_{|\tau'|}$ is positive and super-additive:
\ignore{
\begin{align*}
&d_{|\tau'|}\left(\Exp_{X_P} \Cov_{X_{\tau'}}(X_{\tau'} \mid X_P)\right)\\
&\geq d_{|\tau'|}\left(\Exp_{X_{\PA(\tau')}}\left[\Cov_{X_{\tau'}}(X_{\tau'} \mid X_{\PA(\tau')})\right]\right) +\\
&\qquad \quad d_{|\tau'|}\left(\Exp_{X_P}\Cov_{X_{\PA(\tau')}}\left(\Exp_{X_{\tau'}}(X_{\tau'} \mid X_{\PA(\tau')})\mid X_P\right)\right)\\
&> d_{|\tau'|}\left(\Exp_{X_{\PA(\tau')}}\left[\Cov_{X_{\tau'}}(X_{\tau'} \mid X_{\PA(\tau')})\right]\right)\\
&\geq d_{|\tau|}\left(\Exp_{X_{\PA(\tau)}}\left[\Cov_{X_{\tau}}(X_{\tau} \mid X_{\PA(\tau)})\right]\right)\\
&= d_{|\tau|}\left(\Exp_{X_{P}}\left[\Cov_{X_{\tau}}(X_{\tau} \mid X_{P})\right]\right)
\end{align*}
}
\begin{small}
\begin{alignat*}{2}
&d_{|\tau'|}\left(\Exp_{X_P} \Cov_{X_{\tau'}}(X_{\tau'} \mid X_P)\right) \\
&\geq d_{|\tau'|}\left(\Exp_{X_{\PA(\tau')}}\left[\Cov_{X_{\tau'}}(X_{\tau'}  \mid X_{\PA(\tau')})\right]\right) +
d_{|\tau'|}\left(\Exp_{X_P}\Cov_{X_{\PA(\tau')}}\left(\Exp_{X_{\tau'}}(X_{\tau'} \mid X_{\PA(\tau')})\mid X_P\right)\right)\\
&> d_{|\tau'|}\left(\Exp_{X_{\PA(\tau')}}\left[\Cov_{X_{\tau'}}(X_{\tau'} \mid X_{\PA(\tau')})\right]\right)\\
&\geq d_{|\tau|}\left(\Exp_{X_{\PA(\tau)}}\left[\Cov_{X_{\tau}}(X_{\tau} \mid X_{\PA(\tau)})\right]\right)
= d_{|\tau|}\left(\Exp_{X_{P}}\left[\Cov_{X_{\tau}}(X_{\tau} \mid X_{P})\right]\right)
\end{alignat*}
\end{small}
The third inequality is due to (\ref{eq:thmcond}). The last equality holds since $\PA(\tau) \subseteq P$, and hence, $X_\tau$ is independent of $X_{P\setminus \PA(\tau)}$ conditioned on $X_{\PA(\tau)}$. 
\end{proof}
\vspace{-3mm}
\end{proof}

The following corollary is immediate.

\begin{corollary}\label{cor:ampsem}
Suppose $X$ corresponds to an additive noise model generated by a chain graph $\caC$, i.e.:
\[
X_{\tau} = f_\tau(X_{\PA(\tau)}) + Z_\tau,
\]
where the noise term $Z_\tau$ is independent of $X_{\PA(\tau)}$, for all chain components $\tau$ of $\caD$.

Then, given the chain component decomposition, a topological ordering of $\caD$ is identifiable from $X$ if
there exists a topological ordering $\pi$ of $\caD$ such that
\[
\det(\Cov(Z_\tau)) \leq \det(\Cov(Z_{\tau'}))
\]
for all chain components $\tau \prec_\pi \tau'$. 
\end{corollary}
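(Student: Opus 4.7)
The plan is to derive this as a direct specialization of Theorem \ref{th:4} to the additive noise setting, taking the positive and super-additive family to be the determinant family $\{d_n = \det : \bbC_n \to \bbR\}$. The main work is to simplify the expression $\Exp_{X_{\PA(\tau)}}\Cov_{X_\tau}(X_\tau \mid X_{\PA(\tau)})$ appearing in condition (\ref{eq:thmcond}) into $\Cov(Z_\tau)$, after which the hypothesis of the corollary is exactly the instantiated hypothesis of Theorem \ref{th:4}.

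First I would observe that for the additive noise model $X_\tau = f_\tau(X_{\PA(\tau)}) + Z_\tau$ with $Z_\tau$ independent of $X_{\PA(\tau)}$, conditioning on $X_{\PA(\tau)} = x$ fixes the $f_\tau$ term as a constant, leaving only the randomness of $Z_\tau$. Therefore
\[
\Cov_{X_\tau}(X_\tau \mid X_{\PA(\tau)} = x) = \Cov(Z_\tau)
\]
for every value $x$, and taking expectation over $X_{\PA(\tau)}$ yields
\[
\Exp_{X_{\PA(\tau)}}\Cov_{X_\tau}(X_\tau \mid X_{\PA(\tau)}) = \Cov(Z_\tau).
\]
The analogous identity holds for $\tau'$.

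Next I would recall from the discussion in Section~\ref{sec:prelims} (the bullet list of positive and super-additive families) that $\{\det : \bbC_n \to \bbR\}$ forms such a family: positivity on positive definite matrices is standard, and super-additivity $\det(A+B) \geq \det(A) + \det(B)$ follows from Minkowski's determinant inequality $(\det(A+B))^{1/n} \geq (\det(A))^{1/n} + (\det(B))^{1/n}$ by raising both sides to the $n$-th power and expanding (all cross terms are non-negative).

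Finally, substituting the identity from the first step and the choice $d_n = \det$ into condition (\ref{eq:thmcond}) of Theorem \ref{th:4} gives exactly $\det(\Cov(Z_\tau)) \leq \det(\Cov(Z_{\tau'}))$ for every pair $\tau \prec_\pi \tau'$, which is the hypothesis of the corollary. Hence Theorem \ref{th:4} applies and Algorithm \ref{algo:pop1} (run with $d_n = \det$) recovers a valid topological ordering of $\caD$, proving identifiability. There is no genuine obstacle here; the only subtlety worth flagging is that the determinants being compared act on matrices of possibly different sizes $|\tau|$ and $|\tau'|$, but the comparison is between real numbers, which is precisely what the positive and super-additive family formalism is designed to accommodate.
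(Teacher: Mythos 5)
Your proposal is correct and matches the paper's intent exactly: the paper simply declares the corollary ``immediate'' from Theorem~\ref{THM:ID}, and your two observations---that $\Exp_{X_{\PA(\tau)}}\Cov(X_\tau \mid X_{\PA(\tau)}) = \Cov(Z_\tau)$ under the additive noise assumption, and that $\det$ is a positive and super-additive family (as already noted in the paper's preliminaries via Minkowski's inequality)---are precisely the two facts needed to instantiate condition~(\ref{eq:thmcond}).
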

\ignore{
\noindent In our first step in \Cref{lm:1}, we will prove that source chain components in $\mathcal{G}(D,\Sigma)$ are characterized by minimal determinant. This can be done by recursively identifying source chain components for $\mathcal{G}(D)$ and subgraphs. 
Then we show that alternatively one may minimize determinant to identity a sink node.
This Lemma could be used to prove \Cref{th:3} and \Cref{th:4}.


\begin{lemma}
Let $X \sim (D_X, \Sigma)$ forms a chain graph with $A \in V$ an ancestral set in $\mathcal{G}$. 
For chain component $\tau$, 
if $\PA(\tau)=  \emptyset$, then $\Det (\mathbb{E} (\Cov(X_{\tau})) )= \Delta$; If $\PA(\tau) \neq \emptyset$, then $\Det( \mathbb{E}( \Cov(X_{\tau}))) > \Delta$. This means the source chain components in $\mathcal{G}(X)$ are characterized by minimal determinant. 
Besides,
If $\Det( \mathbb{E}[ \Cov (X_{\tau} |X_A) ]) \equiv \Delta$
does not depend on $\tau$. Then for any $\tau \notin A $, if $\PA(\tau) \in A$, then $\Det (\mathbb{E}( \Cov ( X_{\tau} |X_A) ))= \Delta$, otherwise, $\Det (\mathbb{E}( \Cov ( X_{\tau} |X_A) )) > \Delta$.
\label{lm:1}
\end{lemma} 

\noindent Given $X_{\tau} = B X_{Pa_{(\tau)}} + N_{\tau}, N \sim \mathcal{N}(0, \Sigma_{\tau})$, we assume $\Det(\Sigma_{\tau})
= \Delta $  and is fixed. 
The proof is given as follows using the method of mathematical induction: \\

\noindent \textbf{Step (1)} For any chain component $k \in V \setminus \{ \tau_1 \}$, we have: 
\begin{equation*}
\Det(\Cov(X_1)) \leqslant \Det(\Cov(X_{\tau})), \forall \tau  \neq 1\\
\end{equation*}

\begin{proof}
\begin{equation*}
\begin{split}
& \Det(\Cov(X_1)) = \Det(\Sigma_1) = \Delta \\
& \Det(\Cov(X_{\tau})) = \Det(\Sigma_{\tau} + \underbrace{\Cov(B X_{Pa_{(\tau)}}) }_{P.S.D.} )   \geqslant \Det(\Sigma_{\tau}) = \Delta = \Det(\Cov(X_1))
\end{split}
\end{equation*}
\end{proof}

\noindent \textbf{Step (m-1)} For the $(m-1)^{th}$ element of the ordering, assume that the first $m-1$ elements of the ordering and their parents are correctly recovered. \\
\textbf{Step (m)} Now consider the $m^{th}$ element of the ordering and its parents. Suppose $\tau_1, \dots, \tau_m$ are already identified. For $\tau_k \in \{ \tau_{m+1}, \dots, \tau_{p}\}, \forall k > m$, we have:
\begin{equation*}
    \Det(\mathbb{E}[\Cov(X_{\tau_m} | X_{\tau_{1: m-1}} )]) \leqslant \Det(\mathbb{E}[ \Cov(X_{\tau_k} | X_{\tau_{1: m-1}})])
\end{equation*}
\begin{proof}
\begin{equation*}
\begin{split}
   & \:\:\:\:\:\Cov(X_{\tau_k}) = \mathbb{E}[(X_{\tau_k}- \mu)(X_{\tau_k} - \mu)^T] = \mathbb{E}[X_{\tau_k}  X_{\tau_k}^T] - \mu \cdot \mu^T \\ 
   & \:\:\:\:\:\Det(\mathbb{E}[\Cov(X_{\tau_m} | X_{\tau_{1: m-1}} )]) = \Delta \\ 
   & \:\:\:\:\:\Det(\mathbb{E}[\Cov(X_{\tau_k} | X_{\tau_{1: m-1}} )])  = \Det(\mathbb{E}[ \mathbb{E}[ X_{\tau_k} X_{\tau_k}^T | X_{\tau_{1: m-1}} ]  - \mathbb{E}[ X_{\tau_k} | X_{\tau_{1: m-1}}] \cdot \mathbb{E}[ X_{\tau_k} | X_{\tau_{1: m-1}}] ^T    ]) \\ 
   & = \Det(\mathbb{E}[\underset{X_{pa(\tau_k)}}{\mathbb{E}} \mathbb{E}[X_{\tau_k} X_{\tau_k} ^T | X_{pa(\tau_k)} ] - \mathbb{E}[X_{\tau_k} | X_{\tau_{1: m-1}} ]  \cdot \mathbb{E}[X_{\tau_k} | X_{\tau_{1: m-1}} ] ^T] ) \\ 
    & = \Det(\mathbb{E}[ \underset{X_{pa(\tau_k)}}{\mathbb{E}}[\Cov(X_{\tau_k} | X_{pa(\tau_k)} )] + \underset{X_{pa(\tau_k)}}{\mathbb{E}}[\mathbb{E}[X_{\tau_k} | X_{pa(\tau_k)]}  ]\cdot \mathbb{E}[X_{\tau_k} | X_{pa(\tau_k)}  ] ^T] -\mathbb{E}[X_{\tau_k} | X_{\tau_{1: m-1}} ] \cdot \mathbb{E}[X_{\tau_k} | X_{\tau_{1: m-1}} ]^T ]) \\ 
    & = \Det(\mathbb{E}[ \underset{X_{pa(\tau_k)}}{\mathbb{E}}[\Sigma_{\tau_k}] +  \underset{X_{pa(\tau_k)}}{\mathbb{E}}[ \mathbb{E}[X_{\tau_k} | X_{pa(\tau_k)}  ] \cdot \mathbb{E}[X_{\tau_k} | X_{pa(\tau_k)}  ] ^T ] -{\mathbb{E}} [X_{\tau_k} | X_{\tau_{1: m-1}} ] \cdot \mathbb{E}[X_{\tau_k} | X_{\tau_{1: m-1}} ]^T ]) \\ 
    & = \Det(\Sigma_{\tau_k} + \underset{X_{\tau_{1: m-1}}}{\mathbb{E}}[ \underset{X_{pa(\tau_k)}}{\mathbb{E}} [ \mathbb{E} [X_{\tau_k} | X_{pa(\tau_k)}] \cdot \mathbb{E} [X_{\tau_k} | X_{pa(\tau_k)}]  ^T] \:|\: X_{\tau_{1: m-1}} ] -\underset{X_{\tau_{1: m-1}}}{\mathbb{E}} [\mathbb{E} [X_{\tau_K}]\cdot \mathbb{E}[X_{\tau_K} ]^T \:|\: X_{\tau_{1: m-1}} ] ) \\ 
    & =  \Det(\Sigma_{\tau_k} +  \underset{X_{\tau_{1: m-1}}}{\mathbb{E}}[ \underset{X_{pa(\tau_k)}}{\mathbb{E}} [ \mathbb{E} [X_{\tau_k} | X_{pa(\tau_k)}] \cdot \mathbb{E} [X_{\tau_k} | X_{pa(\tau_k)}]  ^T] \:|\: X_{\tau_{1: m-1}} ]  -  \\
    &\quad \quad \quad \underset{X_{\tau_{1: m-1}}}{\mathbb{E}} [ \underset{X_{pa(\tau_k)}}{\mathbb{E}} [ \mathbb{E}[X_{\tau_k} | X_{pa(\tau_k)}] ] \cdot \underset{X_{pa(\tau_k)}}{\mathbb{E}} [\mathbb{E}[X_{\tau_k} | X_{pa(\tau_k)}]  ]^T \:|\: X_{\tau_{1: m-1}} ] ) \\
    & = \Det(\Sigma_{\tau_k} +   \underbrace{\underset{X_{\tau_{1: m-1}}}{\mathbb{E}}[ \Cov( \mathbb{E}[X_{\tau_k} | X_{pa(\tau_k)}  ]  ) ]}_{P.S.D.} ) \\ 
    & \geqslant \Det(\Sigma_{\tau_k}) 
    = \Det(\mathbb{E}[\Cov(X_{\tau_m} | X_{\tau_{1: m-1}} )]  )
\end{split}
\end{equation*}
 This proves that $\Det( \mathbb{E}[\Cov (X_{\tau_m} | X_{\tau_{1:m-1}}) ] ) \leqslant \Det( \mathbb{E} [ \Cov (X_{\tau_k} | X_{\tau_{1:m-1}})])$, which completes the proof. 
\end{proof}

Step (1) shows that we can identify the source node due to the minimum determinant. In step (m), we show that by conditioning on an ancestral set, one can recover a chain graph with equal determinant assumption whose graph has the entire ancestral set removed in a iterative way. Specifically, in \Cref{th:3} and \Cref{th:4}, the covariance matrix shows the information of how variables in each chain components vary with each other, and the determinant of the covariance matrix gives the measure of magnitude of how much the variables vary with each other. In short, equal determinant can help identfy the topology sort of chain graph $\mathcal{G}$.

Furthermore, the same proof works for any generalized matrix function, including not only determinant but also permanent, trace, and product of diagonal elements. 
As mentioned in Equation (1.1) of \cite{paksoy2014inequalities}: the inequality $g(A+B) \geqslant g(A) + g(B)$ holds for any "generalized matrix function" that includes the determinant, permanent, and product of diagonal elements. For positive definite matrices $A$ and $B$, we get $g(A+B) > g(A)$, because $g(B) > 0.$ (This follows by applying Theorem 3.3 of \cite{marcus1965generalized} with m=n.)
}

\section{General Identifiability}\label{sec:general}
In this section, we establish identifiability conditions for recovering both the chain components as well as the DAG structure of chain graphs from the generated probability distribution. Here, by identifiability, we mean that the partitioning into chain components and the topological order on the chain components are uniquely specified. The exact set of directed and undirected edges can then be recovered using standard variable selection methods (as described in Appendix A of \cite{GDA20}). 

\vspace{-2mm}
\begin{algorithm}
$ P \leftarrow \emptyset$\;
$i \leftarrow 1$\;
$\tau_1 = \arg\min_{S \subseteq V,  S \neq \emptyset}\det (\Cov(X_S))$ \;
{$P \leftarrow P \cup \tau_1$}\;
\While{$V \setminus P \neq \emptyset$}{
$\tau_i \leftarrow \arg\min_{S\subseteq V \setminus P, S \neq \emptyset} \det(\Cov(X_{S} \mid X_P))$\;
$P \leftarrow P \cup \tau_i$\;
$i \leftarrow i+1$\;
}
Return the topological sort $(\tau_1, \dots, \tau_i)$
\caption{Infinite sample algorithm for learning the topological order of a chain graph with unknown chain components.}
\label{algo:pop2}
\end{algorithm}
\vspace{-3mm}

\ignore{

\begin{theorem}
Suppose the random variable $X$ is generated by an AMP-CG $G$ with unknown structure. Then, $G$ is identifiable from $X$ if the following two conditions hold:
\begin{enumerate}
\item[(i)]
There exists a topological order $\pi$ on the chain components such that:
\ignore{ 
\[
\det\left(\Exp_{X_{\PA(\tau)}}\Cov_{X_\tau} (X_{\tau} \mid X_{\PA(\tau)})\right) \leq \det\left(\Exp_{X_{\PA(\tau')}}\Cov_{X_{\tau'}}(X_{\tau'} \mid X_{\PA(\tau')})\right)
\]
}
\begin{small}
\begin{align*}
 \hspace{-5mm}  \det\left(\Exp_{X_{\PA(\tau)}}\Cov_{X_\tau} (X_{\tau} \mid X_{\PA(\tau)})\right) \leq \det \left(\Exp_{X_{\PA(\tau')}}\Cov_{X_{\tau'}}(X_{\tau'} \mid X_{\PA(\tau')})\right)
\end{align*}
\end{small}
for all chain components $\tau \prec_\pi \tau'$.
\item[(ii)] For all chain components $\tau$:
\[
\lambda_{\max}\left( \Exp_{X_{\PA(\tau)}}\Cov_{X_\tau} (X_{\tau} \mid X_{\PA(\tau)}) \right) <1.
\]
\end{enumerate}

\end{theorem}
}

\begin{theorem}\label{TM:4.1}
Suppose the random variable $X$ is generated by an AMP-CG $\mathcal{C}$ with unknown structure.
Then, $\mathcal{C}$ is identifiable from $X$ if the following three conditions hold:
\begin{enumerate}
\item[(i)] For all chain components $\tau$ and all non-empty proper subsets $S \subset \tau$:
\[\det ( \Cov (X_s \mid X_{\tau \setminus s}, X_{\PA(\tau)})) < 1.\]
\item[(ii)]
For all chain components $\tau$:
\[\det (\Cov (X_{\tau} \mid X_{\PA(\tau)}))  > 1.\]
\item[(iii)]
There is a topological order $\pi$ on the chain components such that for all $\tau \preceq_\pi \tau'$.
:
\[
\det (\Cov (X_{\tau} \mid X_{\PA(\tau)})) \leq \det (\Cov (X_{\tau'} \mid X_{\PA(\tau')})).
\]
\end{enumerate}
\end{theorem}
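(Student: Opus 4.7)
The plan is to prove by induction on the iteration counter $i$ that Algorithm \ref{algo:pop2} selects at step $i$ a chain component $\tau_i$ whose parents already lie in $P$ and whose noise determinant $\det(\Sigma_{\tau_i})$ is minimal in $\pi$-order among such ``ready'' components. The induction hypothesis is that after $i-1$ iterations $P = \tau_1 \cup \cdots \cup \tau_{i-1}$ is an ancestral union. The inductive step reduces to the key lemma: under conditions (i)--(iii), the minimizer of $\det(\Cov(X_S \mid X_P))$ over non-empty $S \subseteq V \setminus P$ must be such a valid next component. The reference value is $\det(\Cov(X_{\tau_i} \mid X_P))$, which by the AMP Markov property derived from (\ref{eqn:ampsem}) (using $\PA(\tau_i) \subseteq P$) equals $\det(\Sigma_{\tau_i})$, and this is the smallest value of $\det(\Sigma_\tau)$ over all remaining chain components $\tau$ by condition (iii).

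I would establish the lemma by case analysis on $S$. \emph{Case (a):} if $S \subsetneq \tau$ is a proper non-empty subset of a single chain component, the SEM (\ref{eqn:ampsem}) decomposes $\Cov(X_S \mid X_P)$ as $\Sigma_\tau[S,S]$ plus a positive semidefinite term coming from $\Cov(X_{\PA(\tau)} \mid X_P)$, hence $\Cov(X_S \mid X_P) \succeq \Sigma_\tau[S, S]$ and $\det(\Cov(X_S \mid X_P)) \ge \det(\Sigma_\tau[S, S])$. Applying Fact \ref{FACT:FACTORIZATION} to $\Sigma_\tau$ gives $\det(\Sigma_\tau) = \det(\Sigma_\tau[S, S]) \cdot \det(\Cov(X_{\tau \setminus S} \mid X_S, X_{\PA(\tau)}))$, and condition (i) (applied with $\tau \setminus S$ in place of $S$) bounds the second factor below $1$, yielding $\det(\Sigma_\tau[S, S]) > \det(\Sigma_\tau) \ge \det(\Sigma_{\tau_i})$ where the last step is condition (iii). \emph{Case (b):} if $S = \tau$ is a chain component, then $\Cov(X_\tau \mid X_P) \succeq \Sigma_\tau$ from (\ref{eqn:ampsem}), so $\det(\Cov(X_\tau \mid X_P)) \ge \det(\Sigma_\tau) \ge \det(\Sigma_{\tau_i})$, with equality precisely when $\tau$ is itself a valid next component.

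\emph{Case (c)}, where $S$ straddles $k \ge 2$ chain components $\tau_{j_1}, \ldots, \tau_{j_k}$ in $\pi$-order with $S_m = S \cap \tau_{j_m}$, is the main obstacle, because a naive block-diagonal Loewner lower bound on $\Cov(X_S \mid X_P)$ already fails for a simple two-component example. Instead I would iterate Fact \ref{FACT:FACTORIZATION} topologically to write $\det(\Cov(X_S \mid X_P)) = \prod_{m=1}^k \det(\Cov(X_{S_m} \mid X_{Q_m}))$ with $Q_m = P \cup S_1 \cup \cdots \cup S_{m-1}$. The crucial point is that $Q_m$ consists entirely of non-descendants of $\tau_{j_m}$, so the Markov property from (\ref{eqn:ampsem}) lets me enlarge the conditioning to include $X_{\PA(\tau_{j_m})}$ without altering the residual covariance of $X_{\tau_{j_m}}$; then the law of conditional covariance (Fact \ref{fact:tcov}) yields $\Cov(X_{S_m} \mid X_{Q_m}) \succeq \Sigma_{\tau_{j_m}}[S_m, S_m]$. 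By the Case (a) and (b) bounds each factor satisfies $\det(\Sigma_{\tau_{j_m}}[S_m, S_m]) > 1$ via conditions (i) and (ii); hence for $k \ge 2$ the product strictly exceeds its first factor alone, which is $\ge \det(\Sigma_{\tau_i})$ by (iii). Combining the three cases, the minimizer must be a valid next chain component, closing the induction and recovering the chain components in a topological order consistent with $\pi$.
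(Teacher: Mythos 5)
Your proposal is correct and follows essentially the same route as the paper's proof: the telescoping determinant factorization of $\det(\Cov(X_S \mid X_P))$ over chain components in topological order via Fact \ref{FACT:FACTORIZATION}, lower-bounding each factor by the residual covariance given the parents using the law of conditional covariance and the AMP Markov property, and then invoking conditions (i)--(iii) exactly as in the paper's chain of inequalities (\ref{eq:claim_1})--(\ref{eq:claim_4}). Your case analysis (a)/(b)/(c) is just an unrolled presentation of that same chain together with its equality analysis.
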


\begin{proof}
For simplicity, suppose that there is a unique topological order $\tau_1 \preceq \tau_2 \preceq \cdots \preceq \tau_m$ for the components outside $P$. The proof easily extends to the general case.
Let $\tau_{<i} = \tau_1\cup \cdots \cup \tau_{i-1}$. 

Consider any {\em non-empty} subset $S$ that is disjoint from $P$. We claim:
\begin{align}
&\det(\Cov (X_S \mid X_P) )\nonumber \\ & = \prod_{i=1}^m \det (\Cov ( X_{S \cap \tau_i}\mid X_{S \cap \tau_{<i}}, X_P))\label{eq:claim_1} \\
& \ge \prod_{i=1}^m \det(\Cov( X_{S \cap \tau_i} \mid X_{\PA(\tau_i)}))\label{eq:claim_2} \\
& \ge \prod_{i: S \cap \tau_i \neq \emptyset} \det(\Cov(X_{\tau_i} \mid X_{\PA(\tau_i)}) )\label{eq:claim_3}\\
& \geq \det(\Cov(X_{\tau_1} \mid X_{\PA(\tau_1)})) \label{eq:claim_4}
\end{align}
(\ref{eq:claim_1}) is a consequence of Fact \ref{FACT:FACTORIZATION}. 
To prove (\ref{eq:claim_2}), we invoke the law of conditional covariance (Fact \ref{fact:tcov}):
\begin{align*}
&\det(\Cov (X_{S\cap \tau_i} \mid X_{S \cap \tau_{<i}}, X_P ))  \\
&= \det( \Cov(X_{S \cap \tau_i} \mid X_{\PA(\tau_i)}, X_{S \cap \tau_{<i}}, X_P) 
+ \Cov(\mathbb{E}[ X_{S \cap \tau_i} \mid X_{\PA(\tau_i)}] \mid X_{S \cap \tau_{<i}}, X_P ) )  \\
&\ge \det(\Cov(X_{s\cap \tau_i} \mid X_{\PA(\tau_i)}) )
\end{align*}
The last inequality uses the positive semi-definiteness of covariance matrices and super-additivity of the determinant.
The proof of (\ref{eq:claim_3}) uses Fact \ref{FACT:FACTORIZATION} as follows:
\begin{align*}
& \det(\Cov(X_{\tau_i} \mid X_{\PA(\tau_i)}) ) \\
&= \det(\Cov(X_{S\cap \tau_i} \mid X_{\PA(\tau_i)}))\cdot \det(\Cov(X_{\tau_i \setminus S} \mid X_{S \cap \tau_i}, X_{\PA(\tau_i)})) \\
&\le  \det(\Cov(X_{S\cap \tau_i} \mid X_{\PA(\tau_i)}) )
\end{align*}
using condition (iii) of Theorem \ref{TM:4.1}. (The last inequality is non-strict because $\tau_i \setminus S$ may be empty.) The inequality (\ref{eq:claim_4}) follows from conditions (i) and (ii) of Theorem \ref{TM:4.1}.

For (\ref{eq:claim_4}) to be an equality, $S$ must be contained in exactly one component $\tau$.  For (\ref{eq:claim_3}) to be an equality, $S$ must equal $\tau$. For (\ref{eq:claim_2}) to be an equality, the parents of $S\cap \tau = S$ must be contained in $P$, and hence $S=\tau = \tau_1$. 
\end{proof}
\ignore{
\section{Additional Motivation for Conditions of Theorem \ref{TM:4.1}}

There is a geometric way to view the conditions in Theorem \ref{TM:4.1}, which substantiates the intuition that they require each chain component to cluster together while having large variance as a whole.

Recall that for any matrix $M$, $\det(M)$ corresponds to the volume of the parallelepiped spanned by the rows of $M$. Let the chain components be denoted $\tau_1, \dots, \tau_k$ in a topological order. For $i = 1, \dots, k$, let $M_i$ denote the covariance matrix of $X_{\tau_i} \mid X_{\tau_1 \cup \cdots \cup \tau_{i-1}}$, and let $M$ denote the full covariance matrix, $\Cov(X_{\tau_1 \cup \cdots \cup \tau_k})$. From Fact   \ref{FACT:FACTORIZATION},
\begin{equation}\label{eq:orthogonal}
\det(M) = \det(M_1) \cdots \det(M_k).
\end{equation}

Let $V_i$ denote the set of row vectors of $M_i$, and we identify $V_i$ with the parallelepiped it spans. Due to Equation \ref{eq:orthogonal}, we can view each $V_i$ as residing in a subspace orthogonal to the spans of other $V_j$'s, so that their volumes just multiply with each other. (Alternatively, construct a block diagonal matrix $M'$ where the $i$'th block on the diagonal is $M_i$; clearly, $\det(M) = \det(M')$.)

In this language,  Condition (ii) in Theorem \ref{TM:4.1} says that the volume of each $V_i$ is more than $1$, and condition (iii) says that the volumes are non-decreasing with $i$. Condition (i) says that for any $V_i$, the volume of any sub-parallelepiped is larger than the volume of the  whole. Intuitively, this means that the vectors in $V_i$ form very small angles with each other, so that the volumes keep decreasing as more vectors are added.

}

Informally speaking, for any subset $S$, given its complementary set and parents union of $\tau$ in $\mathcal{C}$, we require the variables in each chain component to be tightly correlated.  Besides, given the union of the parents of chain components $\tau$, 
we require the clustered variables in each chain component to have large generalized variance. The third condition is the same one imposed in Section \ref{sec:id}.

There is a geometric way to view the conditions in Theorem \ref{TM:4.1}, which substantiates the intuition that they require each chain component to cluster together while having large variance as a whole. Recall that for any matrix $M$, $\det(M)$ corresponds to the volume of the parallelepiped spanned by the rows of $M$. Let the chain components be denoted $\tau_1, \dots, \tau_k$ in a topological order. For $i = 1, \dots, k$, let $M_i$ denote the covariance matrix of $X_{\tau_i} \mid X_{\tau_1 \cup \cdots \cup \tau_{i-1}}$, and let $M$ denote the full covariance matrix, $\Cov(X_{\tau_1 \cup \cdots \cup \tau_k})$. From Fact   \ref{FACT:FACTORIZATION},
\begin{equation}\label{eq:orthogonal}
\det(M) = \det(M_1) \cdots \det(M_k).
\end{equation}

Let $V_i$ denote the set of row vectors of $M_i$, and we identify $V_i$ with the parallelepiped it spans. Due to Equation \ref{eq:orthogonal}, we can view each $V_i$ as residing in a subspace orthogonal to the spans of other $V_j$'s, so that their volumes just multiply with each other. (Alternatively, construct a block diagonal matrix $M'$ where the $i$'th block on the diagonal is $M_i$; clearly, $\det(M) = \det(M')$.) In this language,  Condition (ii) in Theorem \ref{TM:4.1} says that the volume of each $V_i$ is more than $1$, and condition (iii) says that the volumes are non-decreasing with $i$. Condition (i) says that for any $V_i$, the volume of any sub-parallelepiped is larger than the volume of the  whole. Intuitively, this means that the vectors in $V_i$ form very small angles with each other, so that the volumes keep decreasing as more vectors are added.

\paragraph{Computational Efficiency.}
{It is known that} Algorithm \ref{algo:pop2} can be implemented in polynomial time. This is because the optimization problems in lines 3 and 5 of the pseudocode correspond to submodular function minimization, as explained in Section \ref{sec:tover}.  Solving submodular function minimization is in polynomial time (see, e.g., \cite{iwata2008submodular}).


\section{Experiments}

\begin{figure}
\centering
\begin{tabular}{|c|c|}
\hline
$\mathcal{C}$&$\mathcal{C}'$\\
\hline
\begin{tikzpicture}[inner sep=1mm]
\node at (0,0.5) (A) {$A$};
\node at (1,0.5) (B) {$B$};
\node at (0,-1) (C) {$C$};
\node at (1,-1) (D) {$D$};
\path[->] (A) edge (B);
\path[->] (A) edge (C);
\path[->] (A) edge (D);
\path[->] (B) edge (D);
\path[-]  (C) edge (D);
\end{tikzpicture}
&
\begin{tikzpicture}[inner sep=1mm]
\node at (0,0) (A) {$A$};
\node at (1,0) (B) {$B$};
\node at (0,-1) (C) {$C$};
\node at (1,-1) (D) {$D$};
\node at (0.5,-1.5) (ECD) {$\epsilon_{CD}$};
\node at (-1,0) (EA) {$\epsilon_A$};
\node at (2,0) (EB) {$\epsilon_B$};
\node at (-1,-1) (EC) {$\epsilon_C$};
\node at (2,-1) (ED) {$\epsilon_D$};
\path[->] (EA) edge (A);
\path[->] (EB) edge (B);
\path[->] (EC) edge (C);
\path[->] (ED) edge (D);
\path[->] (A) edge (B);
\path[->] (A) edge (C);
\path[->] (A) edge (D);
\path[->] (B) edge (D);
\path[-] (EC) edge [bend right] (ECD);
\path[-] (ED) edge [bend left] (ECD);
\end{tikzpicture}\\
\hline
\end{tabular}\caption{Synthetic data generation. Undirected edges correspond to correlated noise.}\label{fig:example3}
\vspace{-0.5cm}
\end{figure}
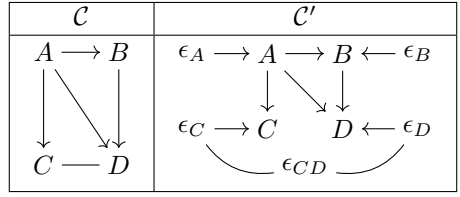

In this section, we compare the performance of Algorithm \ref{algo:pop1} and Algorithm \ref{algo:pop2} on synthetic datasets to state-of-the-art methods for AMP-chain graph structure learning. Recall that as we showed in Theorem \ref{THM:ID}, the DAG on the chain components of an AMP chain graph is identifiable if (\ref{eq:thmcond}) is satisfied for a positive and super-additive family $d_\tau$. Here, we let $d_\tau$ be the determinant operator, and hence dub our algorithm as Determinant of Covariance (DCOV).

\textbf{Synthetic Data Generation:} 
To generate the chain graph $\mathcal{C}$, in our first step, an undirected graph $\mathcal{G}$ with $n$ nodes is generated by using the Erd\H{o}s R\'enyi (ER) model with an expected neighbor size $s=2$ and then symmetrizing. Given the number of chain components $c$, we split the interval $[ 1, n ] $ into $c$ equal-length sub-intervals $[I_1, \dots, I_c]$ so that variable sets for each sub-interval forms chain components $\tau_1, \dots, \tau_c$. Meanwhile, for any $(i, j)$ pair, we set $\mathcal{C}_{i,j} = 0$ if $\exists i \in I_\ell, j \in I_m, \ell > m$.
Given the binary adjacency matrix $\mathcal{C}$, we generate the matrix $M$ of edge weights by $M_{i,j} \sim U(-1.5,-0.5] \cup U[0.5, 1.5)$ if $\mathcal{C}_{i,j}\neq 0$  and $M_{i,j}=0$ otherwise. 

The observational i.i.d. data $X_\tau = M_\tau X_{\PA(\tau)} + Z_\tau$ is generated with a sample size $n=1000$ and a variable size $d \in \{10, 20, 30, 40, 50\}$. $Z_\tau$ is an independent multivariate Gaussian drawn from $N(0, \Sigma_\tau)$ where $\Sigma_\tau$ is generated randomly with $\det(\Sigma_\tau)=1$, satisfying the assumption of Corollary \ref{cor:ampsem}. \Cref{fig:example3} illustrates how the synthetic AMP chain graph data is generated.

\textbf{Baseline Algorithms:}
We compare our \textsc{DCOV} method against 
the \textsc{PC-Like} (\cite{pena2012learning, pena2014learning, pena2016learning}), \textsc{LCD-Like} (Learn Chain Graphs via Decomposition), and \textsc{LDCG} algorithm (learn the largest deflagged graph based on the results of \textsc{LCD-Like} algorithm) ( \cite{javidian2020amp}). We use default
parameters among those baseline algorithms in order to avoid skewing the results in favour of any particular algorithm as a result of hyperparameter tuning\footnote{The implementation of baseline algorithms is available at 
\url{https://github.com/majavid/AMPCGs2019}.}.
All the baseline algorithms above are implemented using 
R-packages (licensed under GPL-2 or GPL-3) such as \textbf{ggm} (\cite{marchetti2006independencies}), \textbf{pcalg} (\cite{kalisch2012}), \textbf{mgcv} (\cite{wood2015package}), \textbf{np} (\cite{racine2020package}), and \textbf{lcd}(~\cite{ma2009package}). 
We use \textbf{rpy2} \citep{gautier2012rpy2} to access R-packages from Python and ensure that all algorithms can be compared in the same environment.
The experiments were conducted on an Intel Core i7-9750H 2.60GHz CPU.

\textbf{Implementation of \textsc{DCOV}:}
We implement Algorithm \ref{algo:pop2} in polynomial time using the Matlab toolbox \textbf{“Submodular Function Optimization”} \citep{krause2010sfo}. We use MATLAB Engine API for Python   to access Matlab-packages from Python. 
Each iteration of Algorithm \ref{algo:pop1} and Algorithm \ref{algo:pop2} needs to estimate the conditional covariance of the remaining chain components given those found so far. Our estimator of the conditional covariance is very similar to that considered by \cite{GDA20} for Bayes networks. In particular, like them, we run a \emph{gam} regression to estimate conditional expectations. We set the p-value with significance level of 0.001 for determining the parents of the node.

\textbf{Quantitative Experiment Results:}
In our experiment, 
we use Structural Hamming Distance (SHD) as the evaluation metric. Figure \ref{fig:exp} reports SHD of our proposed DCOV and other algorithms. The results are averaged over 20 independent repetitions.
As shown in Figure \ref{fig:exp},  DCOV, under known chain component conditions, shows superior performance compared with all other baselines by wide margins. 
Under unknown chain component conditions, DCOV outperforms \textsc{LDCG} and \textsc{LCD}, and is comparable to the \textsc{PC-Like} algorithm.  
One limitation of this work is the lack of real datasets that can be modeled by chain graphs.

\begin{figure*}
\centering    
\subfigure[Algorithm 1]{\label{fig:cg_a}\includegraphics[width=65mm]{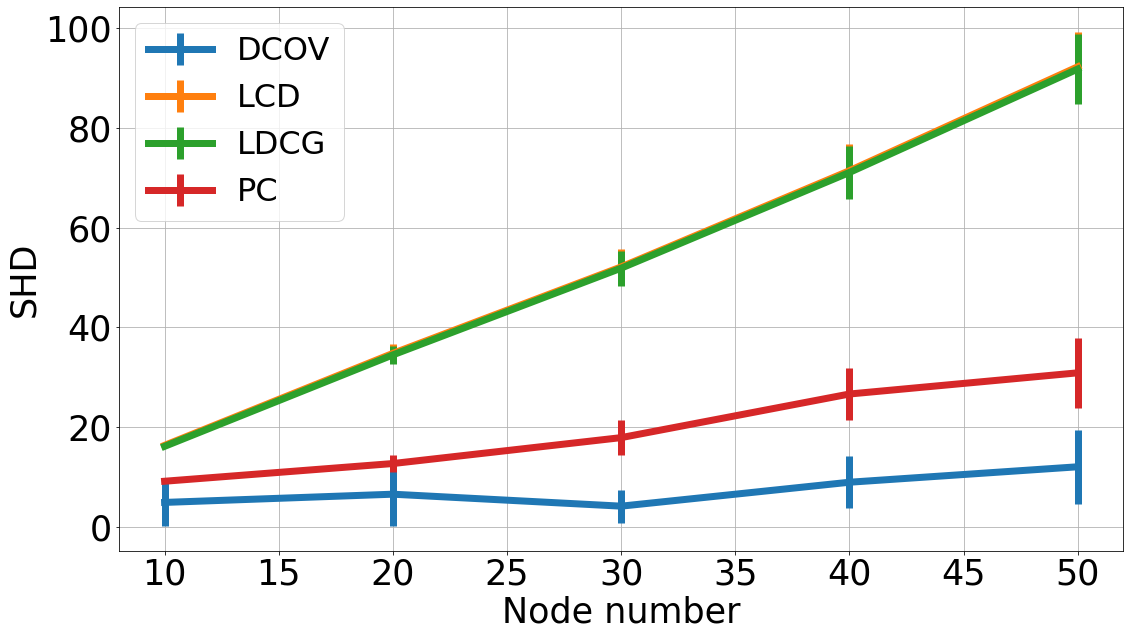}}
\hspace{2mm}
\subfigure[Algorithm 2]{\label{fig:cg_c}\includegraphics[width=65mm]{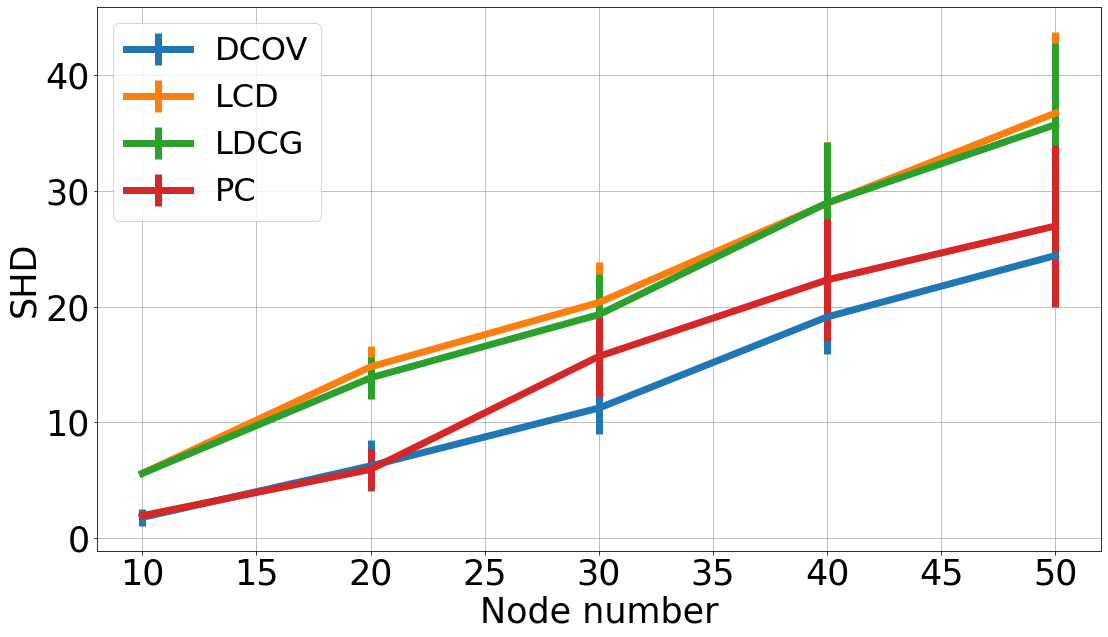}}
\caption{SHD performance (lower is better)}
\vskip -0.5cm
\label{fig:exp}
\end{figure*}

\section{Conclusion}
In this work, we address the problem of recovering AMP chain graph in polynomial time from observational data,
and we proposed two algorithms for both known and unknown chain components to handle the problem. In our experiments, we implement the DCOV algorithm over known chain components. As future work, we are also interested in exploring a score-based approach for chain graph structure learning from observational data. 

\bibliography{major/sample}


\end{document}